\newtheorem{fact}[theorem]{Fact}
\newcommand{\cC}{{\mathcal C}}
\newcommand{\cD}{{\mathcal D}}
\newcommand{\cE}{{\mathcal E}}
\newcommand{\cF}{{\mathcal F}}
\newcommand{\cG}{{\mathcal G}}
\newcommand{\cH}{{\mathcal H}}
\newcommand{\cK}{{\mathcal K}}
\newcommand{\cL}{{\mathcal L}}
\newcommand{\cM}{{\mathcal M}}
\newcommand{\cN}{{\mathcal N}}
\newcommand{\cP}{{\mathcal P}}
\newcommand{\cS}{{\mathcal S}}
\newcommand{\cT}{{\mathcal T}}
\newcommand{\cX}{{\mathcal X}}
\newcommand{\fI}{{\mathfrak I}}
\newcommand{\fM}{{\mathfrak M}}
\newcommand{\fT}{{\mathfrak T}}
\newcommand{\fU}{{\mathfrak U}}
\newcommand{\bbmC}{{\mathbbm C}}
\newcommand{\bbmE}{{\mathbbm E}}
\newcommand{\bbmN}{{\mathbbm N}}
\newcommand{\bbmR}{{\mathbbm R}}
\newcommand{\bbmeins}{{\mathbbm 1}}
\newcommand{\id}{\mathrm{id}}
\newcommand{\tr}{\mathrm{tr}}
\newcommand{\supp}{\mathrm{supp}}
\newcommand{\cl}{\mathrm{cl}}
\newcommand{\conv}{\mathrm{conv}}
\begin{document}

\title{Universal random codes: capacity regions of the compound quantum multiple-access channel	
	   with one classical and one quantum sender \thanks{H. Boche is partly supported by the Deutsche Forschungsgemeinschaft (DFG, German Research
	   	Foundation) under Germany’s Excellence Strategy EXC-2111 390814868 and the Gottfried Wilhelm
	   	Leibniz Prize of the DFG under Grant BO 1734/20-1. G. Janßen is partly supported by the
	   	Bundesministerium für Bildung und Forschung (BMBF, German Federal Ministry of Education and
	   	Research) project QuaDiQua under grant 16KIS0948. S. Saeedinaeeni is partly supported by the BMBF
	   	project Q.Link.X under grant16KIS0858. Parts of the present work were presented at the VDE-TUM
	   	QuaDiQua project meeting at the Walter Schottky Institute, TU Munich (December 2018).}
   }


\titlerunning{Universal random codes \dots}        

\author{Holger Boche         \and
        Gisbert Jan\ss en	 \and
        Sajad Saeedinaeeni
}

\authorrunning{Boche, Jan\ss en, and Saeedinaeeni} 

\institute{H. Boche \at
                Lehrstuhl f\"ur Theoretische Informationstechnik\\
              	Technische Universit\"at M\"unchen \\ 
                80290 M\"unchen \\
                and \\
                Munich Center for Quantum Science and Technology (MCQST), Schellingstr. 4, 80799 München,
                Germany 
                    \and
           		G. Jan\ss en \and S. Saeedinaeeni \at
              	Lehrstuhl f\"ur Theoretische Informationstechnik\\
              	Technische Universit\"at M\"unchen \\ 
              	80290 M\"unchen \\
}


\date{Received: date / Accepted: date}

\maketitle

\begin{abstract}
We consider the compound memoryless quantum multiple-access channel (QMAC) with two sending terminals. In this model, the transmission is governed by the memoryless extensions of a 
completely positive and trace preserving map which can be any element of a prescribed set of possible maps. We study a communication scenario, where one of the senders aims for transmission of classical messages while
the other sender sends quantum information. Combining powerful universal random coding results for classical and quantum information transmission over point-to-point channels,
we establish universal codes for the mentioned two-sender task. Conversely, we prove that the two-dimensional rate region achievable with these codes is optimal. In consequence, we obtain a multi-letter characterization of 
the capacity region of each compound QMAC for the considered transmission task. 
\keywords{quantum information theory \and quantum capacities \and multiple-access channels \and random coding \and entanglement transmission}
\end{abstract}

\section{Introduction}
 \label{sect:introduction}
 A vast effort in research on quantum communication systems performed over the past few years brings technological applications of quantum communication into sight.
 As a consequence, more realistic communication models have to be considered. Usually, these are are much more involved from the physical as well as the mathematical point of view.
 For example do real-world communication situations usually involve more communication parties than just one sender and one receiver.
 A very basic situation in this field of topics is, when two or more sending parties are connected to a receiver via a multiple-access channel (MAC). A sample use case of this model is, when two senders share the same fiber transmission line to a receiver, while both independently aim to achieve individual transmission goals. Developing coding schemes for such situations is technologically paramount importance, since presuming availability of a "dark fibre" for performing a transmission protocol is rarely feasible. This fact already became apparent as a limiting factor in recent attempts to use commercial fibre lines for quantum key distribution \cite{dynes16} -- commercial fibre lines are usually a valuable resource being shared by many users. \newline    
 Consequently, the rate as well as the  performance each of the sending parties can achieve is 
 in general strongly connected to the signal characteristics of other parties. Finding code constructions which allow to determine the set of rate tuples which are asymptotically achievable in the Shannon-theoretic sense is a highly nontrivial task. \newline 
 Early results which determined the average-error capacity region for classical message transmission over a memoryless classical MAC are due to Ahlswede \cite{ahlswede73} and Liao \cite{liao72}. Among others, these works stimulated a vital 
 research in classical information theory (see \cite{csiszar11}, Chapter 14 for an overview). In case of the maximal error criterion, the capacity region is still unknown while Dueck gave an example of a MAC where average and maximal error 
 capacity regions are different \cite{dueck78}.\newline 
 Regarding the setting, were the senders are connected to the receiver by a memoryless quantum multiple-access channel (QMAC), one of the first notable results is the paper \cite{winter99}. Therein the region of achievable rate pairs 
 was determined for the case that all senders aim to convey classical messages. For other scenario where some of the senders aim to send quantum information, the achievable rates where characterized in \cite{yard08}, while the 
 quantum capacity region (i.e. the set of rate achievable rate pairs if all senders send quantum information) was derived in \cite{horodecki07,yard08}. \newline 
 However, all of the mentioned results were proven under the assumption that the transmission c.p.t.p. map which governs the channel transmission is perfectly known to the senders as well as to the receiver. This assumption is rarely ever fulfilled when a real-world communication system is faced. \newline 
 We therefore impose a channel model based on slightly more realistic assumptions. We consider the \emph{compound memoryless QMAC}, where the communication parties have no precise knowledge of the channel, but instead have only a priori knowledge of 
 a set of channels, in which the generating map is contained. The consequence of their imprecise knowledge is that they have to use ``universal'' codes, which perform well regardless of which channel from the set governs the transmission. In technological applications this "set of confidence" is obtained from prior estimation step, which usually involves sending a pilot signal through the channel and imposing a channel tomography protocol on the output to approximately determine the channel state.    \newline 
 The compound channel model already has been studied in classical Shannon theory since the 1960s. In the domain of quantum Shannon theory, past research activities regarding compound quantum channels where mostly concentrated 
 on point to point quantum channels, leading to determination of the classical capacity \cite{hayashi05,bjelakovic09,datta10,mosonyi15}, and the quantum capacities \cite{bjelakovic08,bjelakovic09_a} of the compound quantum channel. \newline 
 Regarding compound quantum channels having more than two users, the research was concentrated on private classical message transmission over wiretap channels, i.e. channels having one sender but two receiving parties. \newline 
 Suitable codes for this situation were developed in \cite{boche14} generalizing techniques introduced for the case of classical compound wiretap channels in \cite{bjelakovic11} to the quantum setting. \newline 
 The first codes for classical message transmission over compound classical-quantum MACs where provided in \cite{hirche16}. \newline 
 In this article, we aim to extend the scope of multi-user quantum Shannon theory in the direction of models with more than one sender which in addition involve channel uncertainty. \\ 
 We consider a QMAC with two sending parties $A$ and $B$ in 
 the ``hybrid'' situation, where $A$ pursues the target of transmitting classical messages while sender $B$ aims for entanglement transmission or entanglement generation. We determine the capacity region of the compound quantum channel
 model in either case. To prove achievability we invoke rather powerful universal random coding results for classical message transmission and entanglement transmission which are implicitly contained in the literature. With these results 
 at hand we are able to construct suitable random codes for classical/quantum transmission over the compound quantum multiple-access channel. 
 
 \paragraph{Outline} 
 	In Section \ref{sect:defintions_result}, we provide ourselves with precise definitions regarding the channel model and codes used in this work. Therein, we also state Theorem \ref{theorem:main} which is the main result of this work 
 	which is a multi-letter characterization of the capacity region of the compound QMAC with a classical and a quantum sender. 
 	Section \ref{sect:random_coding_preliminaries} is of rather technical nature. We introduce random classical message transmission and entanglement transmission codes which are crucial ingredients for our reasoning. Section
 	\ref{sect:proofs} contains the proofs of Theorem \ref{theorem:main}. In Section \ref{subsect:proofs_inner_bounds} we construct suitable universal hybrid codes for the QMAC. These are  obtained by combining 
 	ideas from \cite{yard08} with the universal random codes from the previous section. By providing the converse part of Theorem \ref{theorem:main} in Section \ref{subsect:proofs_outer_bounds}, we complete our proof. 
 \paragraph{Related work}
 	The capacity regions of a perfectly known QMAC with one classical and one quantum sender (and moreover also the genuine quantum capacity regions of that channel model) where determined by Yard et al. in \cite{yard08}. 
 	The strategy used therein to derive codes being sufficient to prove the coding theorem is as follows. By combining known random coding results for classical message transmission from \cite{holevo98}, \cite{schumacher97}, and 
 	entanglement transmission \cite{devetak05} for single-user quantum channels in a sophisticated way, the authors constructed random codes for classical and quantum coding over the QMAC. Combining random codes to 
 	simultaneously achieve different transmission goals was long standard in classical multi-user Shannon theory, and can, in the quantum case traced back to \cite{winter01}, where the capacity regions of quantum multiple-access channels was determined
 	in case that all senders wish to transmit classical messages. That a strategy in the mentioned manner is successful also in situations where classical and quantum transmission goals are to be accomplished simultaneously 
 	was first demonstrated in \cite{devetak05}. 
 	Therein, hybrid codes are constructed which 
 	allow to transmit classical and quantum information over a memoryless point-to-point quantum channel at the same time. The capacity region for simultaneous transmission of classical and quantum information was also shown to exceed 
 	the obvious region which to be achievable by time-sharing strategies for some quantum channels.\newline 
 	In both cases, suitable random codes already exist in the literature without having been exploited simultaneous transmission yet. In case of universal classical message transmission, independent first results can be found 
 	in \cite{bjelakovic09}, \cite{hayashi05}, and \cite{datta10}. In this work, we exploit the recent and very powerful techniques which where added to the aforementioned results in \cite{mosonyi15}. The random entanglement transmission
 	codes we use in this work where developed in \cite{bjelakovic08}, \cite{bjelakovic09_a}. The reader may note, that the approach pursued in \cite{devetak05} to derive good random quantum codes seems to be not suitable in case of compound quantum channels, 
 	as the discussion in Section VII of \cite{boche14} suggests. The random codes derived in \cite{bjelakovic08} stem from generalization of the codes in \cite{klesse07} which where derived in spirit of the so-called decoupling approach to the quantum
 	capacity (see also \cite{hayden08} for a similar application of the decoupling technique.)

\begin{section}{Notation and conventions} \label{sect:notation}
	All Hilbert spaces which appear in this work are finite dimensional over the field of complex numbers equipped with the standard euclidean scalar product. For a Hilbert space $\cH$, $\cL(\cH)$ denotes the set of linear 
	maps (or matrices), while $\cS(\cH)$ denotes the set of density matrices (states), and $\fU(\cH)$ the set of unitaries on $\cH$. For an alphabet $\cX$ (which we always assume to be of finite cardinality), we denote the simplex of 
	probability distributions on $\cX$ by $\cP(\cX)$. With a second Hilbert space $\cK$, we denote by $\cC^{\downarrow}(\cH,\cK)$ the set of completely positive (c.p.) 
	trace non-increasing maps while $\cC(\cH,\cK)$ is the notation for completely positive and trace preserving (c.p.t.p.) maps. For positive semi-definite matrices $a,b \in \cL(\cH)$, we use the definition
	\begin{align*}
	F(a,b) \ \:= \ \|\sqrt{a}\sqrt{b}\|_1^2
	\end{align*}
	for the (quantum) fidelity. For a c.p.t.p. map $\cN \in \cC(\cH_A, \cH_B)$ and a density matrix $\rho$, we use the entanglement fidelity defined by
	\begin{align*}
	F_e(\rho, \cN) := \braket{\Psi, \id_\cH \otimes \cN(\ket{\Psi}\bra{\Psi}) \Psi},
	\end{align*}
	where $\Psi$ is any purification of $\rho$. The von Neumann entropy of a state $\rho$ is defined by $S(\rho) := -\tr \rho \log \rho$, and we will use in this work several entropic quantities which derive from it. 
	For a bipartite state $\rho \in \cS(\cH_A \otimes \cH_B)$, 
	\begin{align*}
	I_c(A\rangle B, \rho) := S(\rho_B) - S(\rho),
	\end{align*} 
	defines the \emph{coherent information} of $\rho$, while 
	\begin{align*}
	I(A;B, \rho) := S(\rho_A) + S(\rho_B) + S(\rho)
	\end{align*}
	is the \emph{quantum mutual information}. We will employ the usual notation for systems, which have classical and quantum subsystems. E.g. 
	\begin{align*}
	\rho_{XB} := \sum_{x \in \cX} p(x) \ket{x}\bra{x} \otimes \rho_{x} 
	\end{align*}
	represents the preparation of a bipartite system, where one system is classical (with preparation being a probability distribution $p \in \cP(\cX)$) while $\rho_x$ is a density matrix 
	for each outcome $x$ of $X$ (the random variable with probability distribution p.) 
	For a set $A \subset \bbmR_0^+ \times \bbmR_0^+$, we denote the closure of $A$ by $\cl A$. Moreover, we define for each $l \in \bbmN$ the set $\frac{1}{l} A$ by 
	\begin{align*}
	\frac{1}{l}A := \left\{\left(\frac{1}{l}x,\frac{1}{l}y\right): (x,y) \in A\right\}.
	\end{align*}
	For each $\delta > 0$ we moreover set $A_\delta := \left\{x \in \bbmR_0^+ \times \bbmR_0^+: \ \exists y \in A: |x-y| \leq \delta \right\}$.
\end{section}
\begin{section}{Basic definitions and main result} \label{sect:defintions_result}
	\begin{subsection}{Compound QMAC}
		In this section we provide precise definitions of codes and capacity regions considered in this work. 
		Let $\cH_A$, $\cH_B$, $\cH_C$ be Hilbert spaces under control of communication parties labelled by $A$, $B$, and $C$. While $A$, and $B$ act as senders for the channel, $C$ is designated as receiver. 
		Let $\fM \subset \cC(\cH_A \otimes \cH_B, \cH_C)$ be 
		a set of c.p.t.p. maps. Unless  otherwise specified, we do not assume further properties of the set (we especially 
		do not demand $\fM$ to be finite.) \\
		The \emph{compound memoryless quantum multiple access channel (QMAC) generated by $\fM$} (the \emph{compound QMAC $\fM$} for short) is given by the family 
		$\{\cM^{\otimes n}: \ \cM \in \fM \}_{n=1}^\infty$
		of transmission maps. The above definition is interpreted as a channel model, where the transmission statistics for 
		$n$ uses of the system is governed by $\cM^{\otimes n}$, where $\cM$ can be any member of $\fM$. We designate $A$ as sender transmitting classical messages. In this work, we  consider two different coding scenarios 
		which differ in the quantum transmission task $B$ performs, \emph{entanglement generation (EG)} where $B$ and $C$ aim to accomplish entanglement generation, and \emph{entanglement transmission (ET)} where they aim to perform entanglement transmission. 
		For the rest of the section we fix a set $\fM = \{\cM_s^{\otimes n}\}_{s \in S} \subset \cC(\cH_{A} \otimes \cH_B,\cH_C)$ of c.p.t.p. maps. 
		\begin{definition}[EG code]\label{def:scenario-I_code} 
			An \emph{$(n,M_1,M_2)$-EG code} for the compound QMAC $\fM$ is a family $\cC = (V(m), \Psi, \cD_m)_{m=1}^{M_1}$, where with additional
			Hilbert spaces $\cF_B \simeq \cF_C \simeq \bbmC^{M_2}$
			\begin{itemize}
				\item $V: \ [M_1] \ \rightarrow \ \cS(\cH_A^{\otimes n})$ is a classical-quantum channel, i.e. $V(m) \in \cS(\cH_A^{\otimes n})$ for each $m \in [M_1]$.
				\item $\Psi \in \cF_B \otimes \cH_B^{\otimes n}$ is a pure state.
				\item $\cD_m \in \cC^{\downarrow}(\cH_C^{\otimes n}, \bbmC^{M_1} \otimes \cF_C)$ for each $m \in [M_1]$ such that $\sum_{m=1}^{M_1} \cD_m$ is a quantum channel.
			\end{itemize}
		\end{definition}
		In in the situation, where $B$ and the receiver perform entanglement transmission over the QMAC, we define 
		\begin{definition}[ET code]\label{def:ET_code} 
			An \emph{$(n,M_1,M_2)$-ET code} for the compound QMAC $\fM$ is a family $\cC = (V(m), \cE, \cD_m)_{m=1}^{M_1}$, where with additional
			Hilbert spaces $\cF_B \simeq \cF_C \simeq \bbmC^{M_2}$
			\begin{itemize}
				\item $V: \ [M_1] \ \rightarrow \ \cS(\cH_A^{\otimes n})$ is a classical-quantum channel.
				\item $\cE \in \cC(\cF_B,\cH_B^{\otimes n})$
				\item $\cD_m \in \cC^{\downarrow}(\cH_C^{\otimes n}, \bbmC^{M_1} \otimes \cF_C)$ for each $m \in [M_1]$, such that $\sum_{m=1}^{M_1} \cD_m$ is trace preserving.
			\end{itemize}
		\end{definition}
		We next define the performance functions of the codes introduced above
		\begin{align*}
		P^{EG}(\cC, \cM^{\otimes n}, m) \ &:= \ F\left(\ket{m}\bra{m} \otimes \Phi, \id_{\cF_B} \otimes \cD \circ \cM^{\otimes n}(V(m) \otimes \Psi)\right), \text{ and} \\
		P^{ET}(\cC, \cM^{\otimes n}, m) \ &:= \ F\left(\ket{m}\bra{m} \otimes \Phi, \id_{\cF_B} \otimes \cD \circ \cM^{\otimes n}(V(m) \otimes (\cE \otimes \id_{\cF_B})(\Phi))\right).
		\end{align*}
		where $\ket{\Phi} := \sqrt{M_2}^{-1} \sum_{x=1}^{M_2} \ket{x} \otimes \ket{x} \in \cF_A \otimes \cF_A$. We set for $X \in \{ET,EG\}$
		\begin{align*}
		P^{X}(\cC,\cM^{\otimes n}) \ := \ \frac{1}{M_1} \sum_{m=1}^{M_1} \ P^{X}(\cC, \cM^{\otimes n}, m).
		\end{align*}
		\begin{definition}[Achievable rates] \label{def:scenario-ET_achiev_rates}
			Let $X \in \{EG,ET\}$. A pair $(R_1,R_2)$ of non-negative numbers is called an \emph{achievable Scenario-X rate} for the compound QMAC $\fM$, if for each $\epsilon, \delta > 0$
			exists a number $n_0 = n_0(\epsilon, \delta)$, such that for each $n > n_0$ there is an $(n,M_1,M_2)$-Scenario-X code $\cC$ for $\fM$ such that the conditions
			\begin{enumerate}
				\item $\frac{1}{n} \log M_i \geq R_i - \delta$ for $i \in \{1,2\}$, and 
				\item $\inf_{s \in S} \ P^{X}(\cC, \cM_s^{\otimes n}) \geq 1 - \epsilon$ 
			\end{enumerate}
			are simultaneously fulfilled. We define the \emph{Scenario-X capacity region of the compound QMAC $\fM$} by
			\begin{align}
			CQ^{X}(\fM) \ := \ \{(R_1,R_2) \in \bbmR_0^+ \times \bbmR_0^+: \ (R_1,R_2) \ \text{achievable Scenario-X rate for} \ \fM  \}. \label{def:cap_resions}
			\end{align}
		\end{definition}
		The following operational facts, follow directly from the above definitions.
		\begin{fact}[Time Sharing] \label{fact:structure_cap_region}
			For each $\fM \subset \cC(\cH_A \otimes \cH_B, \cH_C)$, $CQ^{EG}(\fM)$ and $CQ^{ET}(\fM)$ are compact convex subsets of $\bbmR^2$.
		\end{fact}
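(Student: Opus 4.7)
The plan is to verify three properties of $CQ^X(\fM)$ separately: boundedness, closedness, and convexity, the first two yielding compactness. For boundedness, I would fix any single member $\cM \in \fM$ and note that any code achieving uniformly good performance over the compound family in particular achieves it for $\cM^{\otimes n}$; standard Holevo / quantum Fano-type converse arguments for the single-user memoryless QMAC then force $R_1 \leq \log \dim \cH_A$ and $R_2 \leq \log \dim \cH_B$, placing $CQ^X(\fM)$ inside a fixed bounded rectangle of $\bbmR_0^+ \times \bbmR_0^+$.

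For closedness, the explicit $\delta$-slack built into Definition \ref{def:scenario-ET_achiev_rates} does almost all of the work. Given a sequence of achievable rate pairs converging to $(R_1',R_2')$ and prescribed $\epsilon,\delta > 0$, I would select a sequence member $(R_1,R_2)$ with $|R_i - R_i'| \leq \delta/2$ and invoke its associated code family at tolerance $\epsilon$ and slack $\delta/2$. The resulting codes automatically satisfy $\tfrac{1}{n}\log M_i \geq R_i - \delta/2 \geq R_i' - \delta$ together with $\inf_s P^X(\cC,\cM_s^{\otimes n}) \geq 1-\epsilon$, hence $(R_1',R_2') \in CQ^X(\fM)$.

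Convexity I would establish by a time-sharing construction. For two achievable pairs $(R_1,R_2), (R_1',R_2')$ and rational $\lambda \in (0,1)$, split a block of length $n$ into $n_1 = \lfloor \lambda n \rfloor$ and $n_2 = n - n_1$ and concatenate the two codes: classical messages decompose as a pair drawn from $[M_1]\times [M_1']$, while the reference maximally entangled state of dimension $M_2 \cdot M_2'$ factors as $\Phi_1 \otimes \Phi_2$ after identifying $\cF_B \simeq \cF_{B,1} \otimes \cF_{B,2}$ and $\cF_C \simeq \cF_{C,1}\otimes \cF_{C,2}$. Because the compound channel is memoryless, the sub-blocks are processed independently, and multiplicativity of the fidelity under tensor products yields a combined performance $\geq (1-\epsilon)^2$, giving achievability of $\lambda(R_1,R_2) + (1-\lambda)(R_1',R_2')$ up to an $O(1/n)$ correction absorbed into the slack. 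Irrational $\lambda$ are then handled by the already-established closedness.

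The main technical nuisance is the quantum side of the time-sharing argument: one has to check that the concatenated decoder, built from tensor products of the two sub-decoders, still satisfies the completeness condition (i.e.\ $\sum_m \cD_m$ is a quantum channel in the EG case and trace preserving in the ET case), and that the product identification of the factor Hilbert spaces $\cF_B,\cF_C$ is compatible with Definitions \ref{def:scenario-I_code} and \ref{def:ET_code}. All remaining steps amount to routine bookkeeping.
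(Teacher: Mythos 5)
Your argument is correct and is exactly the standard route the paper has in mind: the paper states this Fact without an explicit proof (asserting it ``follows directly from the definitions''), and its label ``Time Sharing'' refers precisely to your block-concatenation construction for convexity, with the $\delta$-slack in Definition \ref{def:scenario-ET_achiev_rates} giving closedness and a dimension bound (e.g.\ $R_1 \leq \log\dim\cH_C$ via Holevo, $R_2 \leq \log\dim\cH_B$) giving boundedness. Your key observations --- that the same unknown $s$ governs both sub-blocks so the infimum over $S$ survives concatenation, and that multiplicativity of the fidelity under tensor products yields $(1-\epsilon)^2$ --- are the substantive points, and they are handled correctly.
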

		\begin{fact} \label{prop:ent_transm_ent_gen}
			Let $\fM \subset \cC(\cH_A \otimes \cH_B, \cH_C)$. It holds $CQ^{ET}(\fM) \ \subset \ CQ^{EG}(\fM)$.
		\end{fact}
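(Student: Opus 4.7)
The containment captures the operational intuition that entanglement generation is no harder a primitive than entanglement transmission for sender~$B$: any ET-protocol can be repurposed as an EG-protocol by letting $B$ locally prepare the maximally entangled state $\ket{\Phi}$ and feed one half through his ET encoder. My plan is therefore to convert an arbitrary achievable ET rate pair into an EG rate pair with the same parameters, essentially by decomposing the (mixed) ET input state into pure components via a Kraus expansion of the encoder.

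Fix $(R_1,R_2) \in CQ^{ET}(\fM)$ together with $\epsilon,\delta > 0$. For $n$ sufficiently large, pick an $(n,M_1,M_2)$-ET code $\cC_{ET}=(V(m),\cE,\cD_m)$ with $\tfrac{1}{n}\log M_i \geq R_i-\delta$ and $\inf_s P^{ET}(\cC_{ET},\cM_s^{\otimes n})\geq 1-\epsilon$. Writing $\cE$ in Kraus form $\cE(\cdot)=\sum_k A_k(\cdot)A_k^*$ and setting $\ket{\hat\beta_k} := (A_k\otimes \id_{\cF_B})\ket{\Phi}/\sqrt{p_k}$ with $p_k := M_2^{-1}\tr(A_k^*A_k)$, one obtains the convex decomposition $(\cE\otimes\id_{\cF_B})(\ket{\Phi}\bra{\Phi}) = \sum_k p_k\,\ket{\hat\beta_k}\bra{\hat\beta_k}$ into pure states of $\cF_B \otimes \cH_B^{\otimes n}$. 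Since fidelity with a pure target is linear in its second argument, this yields
\begin{align*}
P^{ET}(\cC_{ET},\cM_s^{\otimes n},m) \ = \ \sum_k p_k \, P^{EG}(\cC^{(k)}_{EG}, \cM_s^{\otimes n}, m)
\end{align*}
for every $s\in S$ and every $m\in[M_1]$, where $\cC^{(k)}_{EG}:=(V(m), \ket{\hat\beta_k}, \cD_m)$ is a legitimate $(n,M_1,M_2)$-EG code. Averaging over $m$ gives $\sum_k p_k\,P^{EG}(\cC^{(k)}_{EG},\cM_s^{\otimes n})\geq 1-\epsilon$ uniformly in $s$.

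The remaining work is to select a single index $k^*$ whose EG performance inherits this uniform guarantee. For finite $\fM$ a Markov inequality followed by a union bound produces such a $k^*$ after at most an $\epsilon \mapsto c\sqrt{\epsilon}|\fM|$ degradation, which vanishes in the $\epsilon\downarrow 0$ limit. For general $\fM \subset \cC(\cH_A\otimes\cH_B,\cH_C)$, which is relatively compact in the diamond-norm topology, I would first pass to a fine $\eta$-net $\fM_0\subset \fM$, apply the finite selection argument to obtain a $k^*$ uniform on $\fM_0$, and then lift uniform performance from $\fM_0$ to $\fM$ via the standard continuity bound $|P^X(\cC,\cN^{\otimes n}) - P^X(\cC,\cN'^{\otimes n})|\leq c\, n\,\|\cN-\cN'\|_\diamond$. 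This selection-and-net step, rather than the Kraus decomposition itself, is the genuine technical point; once it is executed, $\cC^{(k^*)}_{EG}$ witnesses $(R_1,R_2)\in CQ^{EG}(\fM)$.
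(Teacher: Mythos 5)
Your core construction coincides with the paper's: there, one takes a spectral decomposition $\id_{\cF}\otimes\cE(\Phi)=\sum_i\lambda_i\Psi_i$, uses linearity of the fidelity with a pure target to write $P^{ET}(\cC,\cM^{\otimes n},m)=\sum_i\lambda_i\,F\bigl(\ket{m}\bra{m}\otimes\Phi,\ \id\otimes\cD_m\circ\cM^{\otimes n}(V(m)\otimes\Psi_i)\bigr)$, and then promotes a single pure component to an EG code; your Kraus decomposition of $(\cE\otimes\id_{\cF_B})(\Phi)$ is the same move in different notation, and your identity $P^{ET}=\sum_k p_k P^{EG}(\cC^{(k)})$ is correct. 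Where you diverge is the selection step: the paper simply picks, for a \emph{fixed} channel $\cM$, an index maximizing the fidelity and concludes $P^{EG}(\tilde{\cC},\cM^{\otimes n})\geq P^{ET}(\cC,\cM^{\otimes n})$; it runs no Markov, union-bound or net argument. Your observation that a single component must be chosen uniformly over $\fM$ is a legitimate point which the paper's write-up glosses over (its argument, as stated, is channel-by-channel), and your finite-$\fM$ selection via Markov plus union bound is sound, since $\epsilon$ may be taken as small as desired relative to $|\fM|$.

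However, the step you yourself flag as the genuine technical point does not close for general (infinite) $\fM$ as you describe it. The continuity penalty for memoryless extensions is of order $n\|\cN-\cN'\|_\diamond$, so to lift uniformity from a net $\fM_0$ to $\fM$ at blocklength $n$ you must take the net resolution $\eta=\eta_n=O(\epsilon/n)$; by Lemma \ref{net_approximation} the net size $|\fM_0|$ then grows polynomially in $n$. Your union bound therefore needs the ET error $\epsilon'$ available at blocklength $n$ to be polynomially small in $n$ (roughly $\epsilon'\lesssim|\fM_0(n)|^{-1}$). But membership $(R_1,R_2)\in CQ^{ET}(\fM)$ only guarantees, for each \emph{fixed} $\epsilon'$, codes for all $n>n_0(\epsilon')$; diagonalizing gives some sequence $\epsilon'_n\to 0$ with no controlled rate, so $|\fM_0(n)|\sqrt{\epsilon'_n}$ need not stay bounded, and the selection can fail. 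Indeed, from the averaged guarantee alone one cannot in general extract one component that is good for arbitrarily many channels (think of $N$ components and $N$ channels with component $i$ failing exactly on channel $i$). The net-plus-selection route does work when the ET codes are known to have exponentially small error — as the codes of Propositions \ref{prop:codes_uninformed_users} and \ref{prop:codes_uninformed_users_2} do, which is what matters for the role of this Fact inside Theorem \ref{theorem:main} — but it is not available from the bare definition of $CQ^{ET}(\fM)$, so your proposal as stated proves the Fact only for finite $\fM$.
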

		\begin{proof}
			Let for an arbitrary but fixed blocklength $n \in \bbmN$, $\cC := (V(m), \cE, \cD_m)_{m=1}^{M_1}$ be an $(n,M_1,M_2)$ ET code. Let for fixed $\cM \in \cC(\cH_A \otimes \cH_B, \cH_C)$, $m \in [M_1]$, a 
			spectral decomposition of $\id_{\cF} \otimes \cE(\Phi)$ given by
			$\sum_{i=1}^N \lambda_i \Psi_i$. It holds for each $m \in [M_1]$
			\begin{align}
			& F\left(\ket{m}\bra{m} \otimes \Phi, \id_{\cF} \otimes \cD \circ \cM^{\otimes n} \circ \id_{\cH_{A}}^{\otimes n} \otimes \cE(V(m) \otimes \Phi)\right)  
			& = \sum_{i=1}^N \lambda_i \ F\left(\ket{m}\bra{m} \otimes \Phi, \id_\cF \otimes \cD_m \circ \cM^{\otimes n}(V(m) \otimes \Psi_i)\right). \label{prop:ent_transm_ent_gen_1}
			\end{align}
			If now $j$ is any index such that $F(\ket{m}\bra{m} \otimes \Phi, \id_\cF \otimes \cD_m \circ \cM^{\otimes n}(V(m) \otimes \Psi_i))$ is maximal, the $(n,M_1,M_2)$ EG code $\tilde{\cC} := (V(m), \Psi_j, \cD_m)_{m=1}^{M_1}$ 
			suffices 
			$P^{EG}(\tilde{\cC}, \cM^{\otimes n}) \ \geq \ P^{ET}(\cC, \cM^{\otimes n})$
			by the inequality in (\ref{prop:ent_transm_ent_gen_1}). \qed
		\end{proof}
		In order to concisely state the main result, we introduce some more notation. Fix Hilbert spaces $\cK_{A}, \ \cK_B, \ \cK_C$, and an alphabet $\cX$. For given probability distribution $p$, c.p.t.p. map $\cT \in \cC(\cK_A \otimes \cK_B, \cK_C)$, 
		cq channel $V: \cX \rightarrow \cS(\cK_A)$, pure state $\Psi \in \cS(\cK_B^{\otimes 2})$, we define an effective cqq state
		\begin{align}\label{def:effective_cqq_state}
		\omega_{XBC} := \omega(\cT, p, V, \Psi) := \sum_{x \in \cX} p(x) \ket{x}\bra{x} \otimes \id_{\cK_B} \otimes \cT(V(x) \otimes \Psi), 
		\end{align}  
		and a region 
		\begin{align} \label{def:one_shot_region}
		\hat{C}^{(1)}(\cT, p, V, \Psi) := \{(R_1, R_2) \in \bbmR_0^+ \times \bbmR_0^+: R_1 \leq I(X;C,\omega) \ \wedge \ R_2 \leq I(B\rangle CX, \omega) \},
		\end{align}
		The following theorem is the first main result this work. It gives a formula for calculation of the capacity regions of a compound quantum multiple-access channel in a situation where one sender send classical messages and the other is performing either entanglement generation or entanglement transmission with the receiver.
		\begin{theorem}\label{theorem:main}
			Let $\fM \subset \cC(\cH_A \otimes \cH_B, \cH_C)$ be a set of quantum channels. The capacity regions $CQ^{EG}(\fM)$ and $CQ^{ET}(\fM)$ of $\fM$ as defined in Eq. (\ref{def:cap_resions}) are determined by the following chain of equalities.
			\begin{align}
			CQ^{EG}(\fM) \ = \ CQ^{ET}(\fM) \ = \ \cl \left(\bigcup_{l=1}^\infty \bigcup_{p,V,\Psi} \bigcap_{\cM \in \fM} \frac{1}{l}  \hat{C}^{(1)}(\cM^{\otimes l}, p, V, \Psi) \right). \label{theorem:main_capacity_region}
			\end{align}
		\end{theorem}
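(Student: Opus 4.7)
The plan is to prove Theorem~\ref{theorem:main} by establishing the two inclusions in Eq.~(\ref{theorem:main_capacity_region}) for the ET region, since then Fact~\ref{prop:ent_transm_ent_gen} together with the converse (which applies equally to EG by the operational reduction) forces $CQ^{EG}(\fM) = CQ^{ET}(\fM)$. That is, we first show achievability into $CQ^{ET}(\fM)$, and then show that any EG-achievable rate pair lies in the RHS set; sandwiching gives both equalities at once.

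For the inner bound (Section \ref{subsect:proofs_inner_bounds}), the strategy is to follow the hybrid-coding approach of \cite{yard08}, but with the single-shot random codes for a fixed $\cM$ replaced by universal random codes that work for every $\cM\in\fM$ simultaneously. Concretely, I fix $l\in\bbmN$ and parameters $(p,V,\Psi)$ and treat the blocked channel $\cM^{\otimes l}$ as a single use of a new QMAC; this is the reason the outer union over $l$ appears in (\ref{theorem:main_capacity_region}). For any rate pair inside $\bigcap_{\cM\in\fM}\hat C^{(1)}(\cM^{\otimes l},p,V,\Psi)$, I generate the classical codebook $V(m)$ at rate $< I(X;C,\omega)$ by picking $m$ i.i.d.\ codewords of length $n$ from $p$ and applying $V$ componentwise, and select the quantum encoder $\cE$ by drawing a Haar-random isometric encoding of $\cF_B$ into the typical subspace of $\Psi_B^{\otimes n}$ at rate $< I(B\rangle CX,\omega)$, using the BSST-style universal classical codes of \cite{mosonyi15} for the classical leg and the decoupling-based universal entanglement transmission codes of \cite{bjelakovic08,bjelakovic09_a} for the quantum leg. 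The decoder is built in two stages as in \cite{yard08,devetak05}: first decode $m$ universally from the $C$-output using the universal classical POVM, then, conditional on the decoded $m$, apply the universal entanglement-transmission decoder tuned to the ensemble $\{V(x)\}$. The analysis proceeds by showing (using the expectation over both random code ensembles and standard gentle-measurement / decoupling arguments) that with positive probability there exists a single deterministic code whose minimal entanglement and classical fidelities simultaneously exceed $1-\epsilon$ for every $\cM\in\fM$; the universality of the two individual random codes is precisely what makes the $\inf_{s\in S}$ survive.

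For the converse (Section \ref{subsect:proofs_outer_bounds}), I start from an $(n,M_1,M_2)$ EG-code with $\inf_{s\in S}P^{EG}(\cC,\cM_s^{\otimes n})\ge 1-\epsilon$ and extract, for each $\cM\in\fM$, a pair of single-letter-type bounds on $\tfrac{1}{n}\log M_1$ and $\tfrac{1}{n}\log M_2$. Let $\omega_n$ denote the cqq state produced on $XBC^n$ when the classical message is uniform on $[M_1]$ and the quantum register is the maximally entangled state of rank $M_2$ sent through the encoder. Applying the quantum Fano / Alicki–Fannes-Winter inequality to the classical decoder output yields $\log M_1 \le I(X;C^n,\omega_n) + n\eta(\epsilon)$, and applying the standard data-processing/coherent-information converse to the entanglement part (conditioning on $X$, since the receiver knows the decoded classical message) yields $\log M_2 \le I(B\rangle C^n X,\omega_n) + n\eta(\epsilon)$, uniformly in $\cM\in\fM$. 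Dividing by $n$, one recognizes exactly the definition of $\tfrac{1}{n}\hat C^{(1)}(\cM^{\otimes n},p,V,\Psi)$ for the empirical $(p,V,\Psi)$ induced by the code; intersecting over $\cM$ and then taking the union over $n$ and closure gives membership in the RHS.

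The main obstacle I expect is the achievability: combining two \emph{universal} random ensembles while preserving joint reliability against the worst channel in $\fM$. In the single-channel case of \cite{yard08} one relies on a successive-decoding argument that needs the classical decoder's post-measurement state to be close to the intended cq-state on average, and the standard random coding bounds give this for free; in the compound setting one must instead invoke the gentle operator lemma against the universal classical decoder of \cite{mosonyi15} (whose error bounds are simultaneous over $\fM$) and then quote the decoupling-based universal quantum decoder of \cite{bjelakovic08} conditionally on each classical message, taking care that the classical codebook is produced i.i.d.\ from $p$ (so that, conditional on any message, the $B$-input distribution is still product and the universal entanglement-transmission theorem applies). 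Controlling the resulting error expression so that a single deterministic $(V,\cE,\{\cD_m\})$ works for \emph{all} $\cM\in\fM$ simultaneously, rather than for a fixed one, is the delicate step; the remaining arguments — extraction via expectation, Fact~\ref{fact:structure_cap_region} for convexity and closure, and the single-letter converse — are then comparatively routine.
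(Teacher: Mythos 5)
Your overall architecture is the paper's: sandwich $CQ^{ET}\subset CQ^{EG}$ via Fact~\ref{prop:ent_transm_ent_gen}, prove achievability into $CQ^{ET}$ by combining the universal random cq codes of \cite{mosonyi15} with the decoupling-based universal entanglement-transmission codes of \cite{bjelakovic08,bjelakovic09_a} in a Yard-style two-stage hybrid code, and close with a Fano/Holevo plus Alicki--Fannes converse evaluated on the cqq state induced by the code (which is exactly Proposition~\ref{prop:outer_bound_c_csi}). The converse sketch is fine. The gap is in the step you yourself flag as delicate, and your proposed resolution of it is not enough. You assert that ``the universality of the two individual random codes is precisely what makes the $\inf_{s\in S}$ survive.'' It does not: the component codes give $\bbmE\,\sup_s \overline{e}\leq 2^{-nc_1}$ and $\bbmE\,\inf_s F_e\geq 1-2^{-nc_2}$, but the combination analysis (gentle measurement applied to the classical decoder acting on the \emph{actual} quantum-encoded input, then the ET decoder conditional on the decoded message) produces cross terms such as the per-channel classical error $\gamma_s(U,\Lambda)$ of the hybrid protocol, which are controlled only \emph{per channel} in expectation; one cannot move $\inf_{s\in S}$ inside these expectations. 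The paper needs two additional ingredients you do not supply: (i) the random ET encoder's expected channel input is \emph{exactly} the product maximally mixed state (property 2 of Proposition~\ref{prop:random_ent_transm_no_csi_finite}), which is what makes $\bbmE_\Lambda T'_{A,s\Lambda}=T_{A,s}^{\otimes n}$ and lets the classical code's guarantee transfer to the hybrid protocol at all; and (ii) a discretization of the (possibly infinite) set $\fM$ by a $\delta_n$-net of subexponential cardinality (Lemma~\ref{net_approximation}), a union bound exploiting the exponential decay of the per-channel expected fidelity gap, and diamond-norm continuity to pass from the net back to all of $S$. Without (ii), ``with positive probability there exists a single deterministic code good for every $\cM\in\fM$'' simply does not follow for infinite $\fM$ from per-channel expectation bounds.

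A secondary issue: you propose to encode the quantum part directly into the typical subspace of $\Psi_B^{\otimes n}$ (``BSST-style''). The paper instead first proves the result for a maximally entangled $\Phi$ (where the expected encoder output is exactly $\pi_B^{\otimes n}$, so point (i) above holds with equality) and then handles general pure $\Psi$ by decomposing $\rho^{\otimes l}$ into $D\leq(l+1)^{\dim\cH}$ maximally mixed blocks and concatenating codes of blocklengths proportional to the block weights (Proposition~\ref{prop:codes_uninformed_users_2} via Lemma~\ref{lemma:approx_emp}). Your route is not unreasonable, but the maximally mixed state on the typical subspace is not trace-norm close to $\Psi_B^{\otimes n}$, so the effective cq channel seen by your universal classical decoder would differ from the one it was designed for, and you would need an extra approximation argument that your sketch does not provide; the paper's block-decomposition detour exists precisely to avoid this.
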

			\begin{remark}The terms on the rightmost sides of the inclusion chains in the above theorem do not need convexification. The corresponding fact for the capacity region of the perfectly known QMAC was already proven in \cite{yard08}.
			For the reader's convenience, we give an argument for the present situation in Appendix \ref{appendix:capacity_regions_convexity}. 
			\end{remark}
			The proof of the equalities in Eq. (\ref{theorem:main_capacity_region}) is split in several parts. Note, that the inequality $CQ^{ET}(\fM) \ \subset \ CQ^{EG}(\fM)$ is Fact \ref{prop:ent_transm_ent_gen}. That the rightmost 
		term in (\ref{theorem:main_capacity_region}) is a subset of $C^{ET}$ is the statement of Proposition \ref{prop:inner_bound_no_csi}. To complete the proof, we show, that $C^{EG}$ does not contain more points than the rightmost term in 
		Proposition \ref{prop:outer_bound_c_csi}.\newline 
		We conclude this section by giving some remarks regarding the interpretation of Eq. (\ref{theorem:main_capacity_region}). The region of points on the right hand side of \ref{theorem:main_capacity_region} is usually strictly smaller than the intersection of the capacity regions of the individual channels in $\fM$ as arising in the scenario when the channel is perfectly known, which reads
			\begin{align*}
			\bigcap_{\cM \in \fM} CQ^{ET}(\{\fM\}) \ = \ \bigcap_{\cM \in \fM} \cl\left(\bigcup_{l=1}^\infty \bigcup_{p,V,\Psi} \frac{1}{l}  \hat{C}^{(1)}(\cM^{\otimes l}, p, V, \Psi) \right). 
			\end{align*}
			This effect is even known from more basic settings than the present one. Already for single-sender compound channels with the uncertainty set having two members, a good code for one of the channels may fail completely for the other possible channel realization. Designing universal codes is therefore inevitable. Channel uncertainty may moreover have crucial impact on the properties of the transmission capacities even int the aforementioned scenaria involving single-sender compound channels. The transmission capacity of a compound channel with the average transmission error as performance criterion obeys no strong onverse while a perfectly known channel always has a strong converse (see \cite{bjelakovic13} and \cite{ahlswede69} for examples of such behaviour.) \newline 
			Beyond these facts, it even can be shown, that in case of a finite classical compound channel with the average error set as performance criterion the following holds. The channel has a strong converse property if and only if there is a member in the channel's uncertainty set which realizes the minimum of the individual capacities of the possible channel realisations. In this case, a "worst channel" determines the capacity (for an extensive discussion, see Chapters 3-5 in \cite{ahlswede15}). However, the existence of a "worst channel" does in general not save the trouble to find universal codes. An optimal code for the worst channel may nevertheless fail on the other channel. Such a channel and other illumating examples can be found in \cite{ahlswede15}, Chapter 3.    
		
		\end{subsection}
	
\end{section}

\begin{section}{Universal random codes for message and entanglement transmission}\label{sect:random_coding_preliminaries}
	In this section, we collect some universal random coding results for entanglement transmission and classical message transmission over single-sender channels. These are essential ingredients for the construction of codes for proving the main result of the paper, Theorem \ref{theorem:main}. Some of the statements below, are already implicitly 
	contained in the literature. However, the random nature of the codes where not explicitly stressed. Some additional properties of these random codes are revealed below, and may also be useful in other occasions. 
	\begin{subsection}{Classical message transmission} \label{subsect:cq_random}
		For the reader's convenience, we first introduce some terminology regarding classical message transmission over classical-quantum channels. A map $V: \cX \rightarrow \cS(\cH)$ with a (finite) alphabet $\cX$ and a Hilbert space $\cH$ is called a \emph{classical-quantum (cq) channel}.
		An \emph{$(n,M)$ classical message transmission code} for $V$ is a family $\cC := (u_m,D_m)_{m=1}^M$, where $u_m \in \cX^n$, and $D_m \in \cL(\cK)$ for each $m\in [M]$, with the additional property, 
		that $\sum_{m=1}^M D_m = \bbmeins_{\cK}$, if $n$ instances of the cq channel $W$ and the code $\cC$ are used 
		for classical message transmission. As error criterion we use the \emph{average transmission error} defined by 
		\begin{align*}
		\overline{e}(\cC, W^{\otimes n}) := \frac{1}{M} \sum_{m=1}^M \ \tr (\bbmeins_\cK^{\otimes n} - D_m) W^{\otimes n}(u_m).
		\end{align*}
		The following proposition states existence of universal random message transmission codes for each given set of classical-quantum channels.
		Its proof can be extracted from \cite{mosonyi15}, where it was proven using the properties of quantum versions of the Renyi entropies together in combination with the Hayashi-Nagaoka random coding lemma \cite{hayashi03}. 
		\begin{proposition}[Universal random cq codes\cite{mosonyi15}, Theorem 4.18] \label{prop:cq_random_coding_no_csi}
			Let $\fI := \{W_t: \cX \rightarrow \cS(\cK_C): t \in T\}$ be a set of classical-quantum channels, and $q \in \cP(\cX)$. For each $\delta > 0$ and large enough $n \in \bbmN$ there exists an $(n,M)$-random 
			message transmission code $\cC(U) = (U_m, D_m(U))_{m=1}^{M}$ which fulfills the following conditions. 
			\begin{enumerate}
				\item $U = (U_1,\dots, U_M)$ is an independent family of random variables, each with distribution $p^{\otimes n}$,
				\item $\tfrac{1}{n} \log M \ \geq \ \inf_{t \in T} I(X;C, \tau_t) - \delta$, where $\tau_t := \sum_{x \in \cX} p(x) \ket{x^X}\bra{x^X} \otimes W_t(x)$, and
				\item $\bbmE \sup_{t \in T} \overline{e}(\cC, W_t) \leq 2^{-nc}$,
			\end{enumerate}
			where $c > 0$ is a constant dependent on $\delta$.
		\end{proposition}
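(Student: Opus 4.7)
The plan is to follow the strategy of \cite{mosonyi15}, combining an i.i.d.\ random codebook with a channel-independent (``universal'') decoder whose average error is controlled via the Hayashi-Nagaoka operator lemma. First, I draw the codewords $U_1,\dots,U_M$ independently from $p^{\otimes n}$, with
\begin{align*}
M := \lfloor 2^{n(R-\delta/2)} \rfloor, \qquad R := \inf_{t\in T} I(X;C,\tau_t).
\end{align*}
This immediately secures items (1) and (2) of the proposition; the entire task reduces to constructing decoding POVMs $(D_m(U))_{m=1}^M$ which depend only on the codebook $U$ (and not on the channel index $t$) yet simultaneously achieve exponentially small average error for every $W_t$, $t\in T$.

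For the decoder I construct channel-independent test operators $\Lambda_m(U)$ of sandwiched-Renyi form, essentially projections onto eigenspaces of an appropriate power of $W(U_m)$ exceeding a threshold involving a fixed universal reference state $\sigma$ on $\cK_C^{\otimes n}$ chosen independently of $t$, and then pass to an honest POVM
\begin{align*}
D_m(U) := \Bigl(\sum_{k=1}^M \Lambda_k(U)\Bigr)^{-1/2} \Lambda_m(U) \Bigl(\sum_{k=1}^M \Lambda_k(U)\Bigr)^{-1/2}
\end{align*}
on the support of $\sum_k \Lambda_k(U)$, completed arbitrarily off that support. The Hayashi-Nagaoka inequality then bounds $\bbmE\,\overline{e}(\cC(U),W_t^{\otimes n})$ in terms of (i) the expected type-I error of a single $\Lambda_m(U)$ against the correct output $W_t^{\otimes n}(U_m)$, and (ii) the sum of expected cross-terms against wrong codewords. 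Because the $\Lambda_m(U)$ make no reference to $t$, both contributions admit estimates through the sandwiched Renyi divergences of $\tau_t$, which decay exponentially at a strictly positive exponent whenever the rate stays below $I(X;C,\tau_t)$.

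The main obstacle is interchanging $\sup_{t\in T}$ with $\bbmE_U$ in item (3) when $T$ need not be finite. Following the reasoning indicated in \cite{mosonyi15}, I would engineer $\sigma$ and the test thresholds so that type-counting arguments produce a $t$-independent prefactor of at most $\mathrm{poly}(n)$, yielding a bound of the form $\bbmE\,\overline{e}(\cC(U),W_t^{\otimes n})\leq \mathrm{poly}(n)\cdot 2^{-nc_0}$ which is uniform in $t\in T$ for some $c_0>0$. Since the right-hand side is then $t$-independent, $\sup_t$ may be pulled inside the expectation at no cost; choosing any $c<c_0$ and $n$ large enough to absorb the polynomial prefactor delivers the claimed estimate $\bbmE\sup_{t\in T} \overline{e}(\cC(U),W_t^{\otimes n}) \leq 2^{-nc}$ and completes the proof.
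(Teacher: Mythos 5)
Your overall route (i.i.d.\ codebook drawn from $p^{\otimes n}$, a channel-independent decoder built from R\'enyi-type tests against a universal reference state, and the Hayashi--Nagaoka lemma to control the error) is exactly the approach of \cite{mosonyi15} that the paper invokes, and items (1) and (2) are handled correctly. However, there is a genuine gap at the decisive step, item (3). From your construction you obtain, for each fixed $t\in T$, a bound of the form $\bbmE_U\,\overline{e}(\cC(U),W_t^{\otimes n})\leq \mathrm{poly}(n)\cdot 2^{-nc_0}$ that is uniform in $t$. This yields only $\sup_{t\in T}\bbmE_U\,\overline{e}(\cC(U),W_t^{\otimes n})\leq \mathrm{poly}(n)\cdot 2^{-nc_0}$; it does \emph{not} allow you to ``pull $\sup_t$ inside the expectation at no cost'', since in general $\sup_t \bbmE\, X_t \leq \bbmE\, \sup_t X_t$ and the gap between the two can be large when $T$ is infinite. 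The whole point of the proposition as stated (and the reason it is usable later, when one must pick a single codebook realization that is simultaneously good for the classical and the quantum task) is the stronger statement $\bbmE_U \sup_{t}\overline{e}\leq 2^{-nc}$, so this interchange cannot be waved through.

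The missing ingredient is a discretization of the channel set: replace $T$ by a finite $\eta_n$-net $\fI_{\eta_n}$ of the cq channels (Lemma \ref{net_approximation}, with $\eta_n$ chosen exponentially small so that the net cardinality grows at most like $2^{n\alpha}$ for an $\alpha$ strictly smaller than $c_0$), bound the expectation of the maximum over the net by the union bound
\begin{align*}
\bbmE_U \max_{t\in \fI_{\eta_n}} \overline{e}(\cC(U),W_t^{\otimes n}) \;\leq\; \sum_{t\in \fI_{\eta_n}} \bbmE_U\, \overline{e}(\cC(U),W_t^{\otimes n}) \;\leq\; |\fI_{\eta_n}|\cdot \mathrm{poly}(n)\, 2^{-nc_0},
\end{align*}
and then use continuity of $\overline{e}(\cC(U),\cdot)$ in the channel (for every fixed realization of $U$) to pass from the net back to all of $T$ with an additive loss of order $n\eta_n$, which remains exponentially small. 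This is precisely the device the paper itself employs in the proof of Proposition \ref{prop:codes_uninformed_users} (and which underlies the proof extracted from \cite{mosonyi15}); without it, or some equivalent uniformization argument, your final estimate for $\bbmE\sup_{t\in T}\overline{e}$ does not follow.
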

	\end{subsection}
	\begin{subsection}{Entanglement transmission} \label{subsect:q_random}
		In this paragraph we introduce universal coding results for the task of entanglement transmission, which were implicitly proven already in \cite{bjelakovic08},\cite{bjelakovic09_a}. For a given quantum channel 
		$\cN \in \cC(\cK_A, \cK_B)$, an $(n,M)$ entanglement transmission code is a pair $\cC = (\cE,\cD)$, where with a Hilbert space $\cF$ of dimension $M$, $\cE \in \cC(\cF, \cH_A^{\otimes n})$, and $\cD \in \cC(\cH_B^{\otimes n}, \cF)$
		are c.p.t.p. maps. The performance of the code $\cC$ is then measured by the entanglement fidelity $F_e(\pi, \cN)$, where $\pi$ is the maximally mixed state on $\cF$. Their strategy to derive universal entanglement transmission codes
		for compound quantum channels was to generalize the decoupling lemma from \cite{klesse07} to achieve a one-shot bound for the performance in case of a finite set of channels for a fixed code subspace. A subsequent randomization over unitary transformations of that encoding led to random codes 
		with achieving arbitrarily close to coherent information minimized over all possible channel states, given maximally mixed state on the input space. Further approximation using the so-called BSST lemma \cite{bennett02} approximating asymptotically each 
		state by a sequence of maximally mixed states and a net approximation on the set of channels allowed to achieve the capacity of arbitrary compound quantum channels with these random codes. The the authors of \cite{bjelakovic08},
		\cite{bjelakovic09_a} applied a step of derandomization to end up with deterministic entanglement transmission codes sufficient to prove their coding theorem. We in turn, are explicitly interested in the random structure of the code. 
		In the next proposition, we replicate the statement hidden in the proof of  Lemma 9 in \cite{bjelakovic09_a}. Moreover, we notice, that the random codes constructed in that proof have a very convenient property regarding the expected 
		input state to the channel after encoding. It is a tensor product of the maximally mixed state appearing in the coherent information terms lower-bounding the rate. 
		\begin{proposition}[cf. \cite{bjelakovic09_a}, Lemma 9, \cite{boche17}]\label{prop:random_ent_transm_no_csi_finite}
			Let $\fI := \{\cN_t\}_{t \in T} \subset \cC(\cK_A, \cK_B)$ be a set of c.p.t.p. maps, $\cG \subset \cK_A$ a subspace of $\cH$, $\delta > 0$. For each large enough $n$ exists an $(n,M)$ random entanglement transmission code 
			$\cC_u := (\cE_u, \cD_u)$, $u \in A \subset \fU(\cG^{\otimes n})$, $|A| < \infty$ such that 
			\begin{enumerate}
				\item $\frac{1}{n}\log M \  \geq  \ \inf_{t \in T} \ I_c(A\rangle B, \sigma_t) - \delta$, where $\sigma_t := \id_{\cH_A} \otimes \cN_t(\ket{\psi}\bra{\psi})$ with $\psi$ being a purification of $\rho$, 
				\item $\frac{1}{|A|} \sum_{u \in A} \cE_u(\pi_\cF) \ = \ \pi_{\cG}^{\otimes l}$, and 
				\item $\frac{1}{|A|} \sum_{u \in A}  \ \underset{t \in T}{\inf} \ F_e(\pi_\cF, \cD_u \circ \cN^{\otimes n} \circ \cE_u) \ \geq 1 - 2^{-nc}$
			\end{enumerate}
			with a constant $c > 0$. 
		\end{proposition}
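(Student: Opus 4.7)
The plan is to combine the compound-channel decoupling approach used in \cite{bjelakovic08,bjelakovic09_a} with a $1$-design symmetrization that enforces the average-encoding identity in property 2. Fix $\delta > 0$ and set $M := \lfloor 2^{n(\inf_{t \in T} I_c(A\rangle B,\sigma_t) - \delta)}\rfloor$. Choose once and for all an isometric embedding $\iota : \cF \hookrightarrow \cG^{\otimes n}$ onto an $M$-dimensional subspace, and parametrize encoders by unitaries $u \in \fU(\cG^{\otimes n})$ via $\cE_u(\cdot) := u\, \iota (\cdot) \iota^* u^*$ (composed with the inclusion $\cG^{\otimes n} \hookrightarrow \cH_A^{\otimes n}$). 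For each such $u$ let $\cD_u$ be the universal decoder --- the pretty-good reconstruction map built from the typical projections of the channels in $\fI$ applied to the code subspace $u\iota(\cF)$ --- exactly as in the construction of Lemma~9 in \cite{bjelakovic09_a}.

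The first key input is the compound one-shot decoupling estimate underlying \cite{bjelakovic08}: for Haar-distributed $U \in \fU(\cG^{\otimes n})$,
\begin{align*}
\bbmE_U \bigl(1 - F_e(\pi_\cF,\, \cD_U \circ \cN_t^{\otimes n} \circ \cE_U)\bigr) \ \leq \ 2^{-nc'}
\end{align*}
for each $t \in T$, with $c' = c'(\delta) > 0$. Since $T$ is finite here, a union bound over $t$ promotes this to an estimate on $\bbmE_U \sup_{t \in T}(1 - F_e)$. Combining L\'evy-type concentration on $\fU(\cG^{\otimes n})$ for the Lipschitz functional $u \mapsto \sup_{t \in T} F_e(\pi_\cF, \cD_u\circ\cN_t^{\otimes n}\circ\cE_u)$ with an $\varepsilon$-net of cardinality $2^{\mathrm{poly}(n)}$ in the operator norm then produces a finite subset $A_0 \subset \fU(\cG^{\otimes n})$ such that $\sup_{t \in T}(1 - F_e(\pi_\cF, \cD_u \cN_t^{\otimes n}\cE_u)) \leq 2^{-nc''}$ for every $u \in A_0$.

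To upgrade the average-encoding identity from one that holds in expectation over Haar to an exact identity over a finite set, I would symmetrize $A_0$ by a unitary $1$-design $G \subset \fU(\cG^{\otimes n})$, e.g.\ the Heisenberg--Weyl group on $\cG^{\otimes n}$, and set $A := \{g u : g \in G,\, u \in A_0\}$. The defining property of a $1$-design then gives
\begin{align*}
\frac{1}{|A|} \sum_{u' \in A} \cE_{u'}(\pi_\cF) \ = \ \frac{1}{|A_0|} \sum_{u \in A_0} \frac{1}{|G|} \sum_{g \in G} g\,(u\iota\iota^* u^*)\,g^*/M \ = \ \bbmeins_{\cG^{\otimes n}}/\dim(\cG^{\otimes n}) \ = \ \pi_{\cG}^{\otimes n},
\end{align*}
so property 2 holds exactly. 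The main obstacle is verifying that the $1$-design symmetrization is compatible with property 3, since each symmetrized encoder $\cE_{gu}$ requires its own decoder $\cD_{gu}$ adapted to the rotated code subspace $gu\iota(\cF)$. This is handled by left Haar invariance: $gu$ is Haar-distributed whenever $u$ is, so the same L\'evy-based concentration argument applies to each translate, and enlarging $A_0$ by a union bound over $G$ (only exponential in $n$, whereas the per-unitary failure probability from L\'evy's inequality decays doubly exponentially) leaves the uniform fidelity estimate in (3) intact. The rate bound in (1) is built into the choice of $M$, completing the construction.
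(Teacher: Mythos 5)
Your overall architecture (Haar-random rotation of a fixed $M$-dimensional subspace of $\cG^{\otimes n}$ with the decoupling-based decoder of \cite{bjelakovic08,bjelakovic09_a}, then extraction of a finite family of unitaries, with a unitary $1$-design twirl enforcing the exact identity $\frac{1}{|A|}\sum_{u\in A}\cE_u(\pi_\cF)=\pi_\cG^{\otimes n}$) is the right one, and indeed matches the route the paper delegates to \cite{bjelakovic09_a} (Lemma 9) and \cite{boche17}; the $1$-design computation for property 2 is correct. However, there are two genuine gaps. First, you assume ``$T$ is finite here'': the proposition is stated for an arbitrary set of c.p.t.p.\ maps, and the paper applies it to the infinite family $\{\cT_{B,s}\}_{s\in S}$, so finiteness cannot be assumed. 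The union bound over $t$ must be preceded by the channel discretization of Lemma \ref{net_approximation} with an exponentially fine net (fineness $2^{-n\theta}$, cardinality $2^{O(n)}$), followed by a diamond-norm continuity estimate for $F_e$ and for the coherent-information infimum to return from the net to all of $T$; without this the exponent in property 3 and the rate in property 1 are not established for general $\fI$. Second, your passage to the finite set leans on L\'evy concentration for the functional $u\mapsto\sup_{t}\bigl(1-F_e(\pi_\cF,\cD_u\circ\cN_t^{\otimes n}\circ\cE_u)\bigr)$, asserted to be Lipschitz. The $u$-dependence enters through the decoder $\cD_u$ as well, and the pretty-good-type recovery maps of \cite{bjelakovic08,bjelakovic09_a} involve projections and inverse square roots whose Lipschitz dependence on the code subspace is not obvious and is nowhere verified; as written this step does not go through. (You are right that plain Markov plus a union bound over the $|G|=\dim(\cG^{\otimes n})^2$ translates would not suffice, which is presumably why you reached for L\'evy, but that only replaces one unproved estimate by another.)

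Both gaps are repairable, and the second one can be removed entirely by a weaker, purely averaging argument that suffices for the statement: property 3 only requires the \emph{average} over $u\in A$ of $\inf_t F_e$ to be exponentially close to $1$, not that every element of $A$ be good. Setting $h(u):=\frac{1}{|G|}\sum_{g\in G}\inf_{t}F_e(\pi_\cF,\cD_{gu}\circ\cN_t^{\otimes n}\circ\cE_{gu})$, left invariance of the Haar measure gives $\bbmE_U h(U)=\bbmE_U\inf_t F_e(\pi_\cF,\cD_U\circ\cN_t^{\otimes n}\circ\cE_U)\geq 1-2^{-nc''}$ (after the net reduction over channels), so there exists a single $u_0$ with $h(u_0)\geq 1-2^{-nc''}$, and $A:=\{gu_0: g\in G\}$ then satisfies properties 2 and 3 simultaneously, with no concentration inequality, no $\varepsilon$-net over unitaries, and no Lipschitz control of the decoder needed. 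With the channel-net step inserted and this replacement made, your construction yields the proposition; as submitted, the finite-$T$ restriction and the unverified Lipschitz/L\'evy step leave the proof incomplete.
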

		\begin{remark}
			In \cite{bjelakovic08,bjelakovic09_a}, actually a continuous random code distributed according to the Haar measure on the unitary group on the encoding subspace was constructed. For the finite random code in Proposition \ref{prop:random_ent_transm_no_csi_finite} see \cite{boche17}.
		\end{remark}
		We notice, that in earlier work on perfectly known quantum channels (see e.g. \cite{devetak05}, \cite{devetak05b}, \cite{hsieh10}) usually a different type of random code was used. Instead of employing the random entanglement transmission 
		codes from \cite{klesse07} or \cite{hayden08} based on the decoupling approach, the entanglement generation codes of \cite{devetak05} were used. These arise from a clever reformulation of private classical codes for a classical-quantum 
		wiretap channel. In Appendix D in \cite{devetak05}, these codes where further modified to approximately reproduce $\rho^{\otimes n}$ for a given $\rho$ by the random encoding. \newline 
		However, we remark here, that establishing the results on this paper by generalizing the random codes in \cite{devetak05} is not very auspicious. As it was already noted in \cite{boche14}, the method of constructing entanglement generation codes from private classical transmission codes employed in \cite{devetak05}
		seems not to carry over to the case of channel uncertainty. 
		\begin{remark}
			By applying the Proposition \ref{prop:random_ent_transm_no_csi_finite} to the special case $|\fI| = 1$, we obtain an alternative to the random codes from \cite{devetak05} in case of a perfectly known quantum channel.
			Alternatively one could also take the direct route to prove such a result and derive such codes directly from the original works \cite{klesse07}, \cite{hayden08} on the perfectly known channel. 
		\end{remark}
	\end{subsection}
\end{section}
\begin{section}{Proofs}\label{sect:proofs}
	\begin{subsection}{Inner bounds to the capacity regions} \label{subsect:proofs_inner_bounds}
		In this paragraph, we prove the achievability part of Theorem \ref{theorem:main}, i.e. the following statement. 
		\begin{proposition}[Inner bound for the capacity region for uninformed users]\label{prop:inner_bound_no_csi}
			Let $\fM \subset \cC(\cH_A \otimes \cH_B, \cH_C)$. It holds
			\begin{align}
			CQ^{ET}(\fM) \ \supset \ \cl \left(\bigcup_{l=1}^\infty \bigcup_{p,V,\Psi} \bigcap_{\cM \in \fM} \frac{1}{l}  \hat{C}^{(1)}(\cM^{\otimes l}, p, V, \Psi) \right)
			\end{align}
		\end{proposition}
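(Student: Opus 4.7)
The plan is to combine universal random cq coding (Proposition~\ref{prop:cq_random_coding_no_csi}) with universal random entanglement transmission coding (Proposition~\ref{prop:random_ent_transm_no_csi_finite}) in the spirit of the perfectly-known case of Yard, Hayden and Devetak~\cite{yard08}. By a standard blocking reduction (replace each $\cM \in \fM$ by $\cM^{\otimes l}$ and rescale the blocklength) it suffices to prove the case $l = 1$, so fix $(R_1,R_2)$ in the interior of $\bigcap_{\cM\in\fM}\hat{C}^{(1)}(\cM,p,V,\Psi)$ and $\delta > 0$ with $R_1 + 2\delta \le \inf_\cM I(X;C,\omega_\cM)$ and $R_2 + 2\delta \le \inf_\cM I(B\rangle CX,\omega_\cM)$. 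Set $\Psi_B := \tr_{\cK_B}\Psi$ and introduce the augmented channels $\cN_\cM:\cH_B \to \cX \otimes \cH_C$ by $\sigma \mapsto \sum_x p(x)\ket{x}\bra{x}^X \otimes \cM(V(x) \otimes \sigma)$; a direct entropy calculation gives $I_c(B\rangle XC, (\id \otimes \cN_\cM)(\Psi)) = I(B\rangle CX,\omega_\cM)$, identifying the quantum rate with the coherent information of these augmented channels.

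Applying Proposition~\ref{prop:cq_random_coding_no_csi} to $\{\tilde W_\cM\}_{\cM\in\fM}$ with $\tilde W_\cM(x) := \tr_{\cK_B}\cM(V(x) \otimes \Psi_B)$ and distribution $p$ yields random codewords $U_m \sim p^{\otimes n}$ i.i.d.\ together with decoding POVM elements $D_m(U)$ of rate $\tfrac{1}{n}\log M_1 \ge R_1 + \delta$ and universal expected error $\bbmE \sup_\cM \overline e \le 2^{-nc_1}$. Applying Proposition~\ref{prop:random_ent_transm_no_csi_finite} to $\{\cN_\cM\}_{\cM\in\fM}$ with input $\Psi_B$ produces random entanglement transmission codes $(\cE_u, \cD_u^{(q)})$ of rate $\tfrac{1}{n}\log M_2 \ge R_2 + \delta$ satisfying $\bbmE_u \inf_\cM F_e(\pi_{\cF_B}, \cD_u^{(q)} \circ \cN_\cM^{\otimes n} \circ \cE_u) \ge 1 - 2^{-nc_2}$, and crucially the averaging identity $\tfrac{1}{|A|}\sum_u \cE_u(\pi_{\cF_B}) = \pi_\cG^{\otimes n}$ with $\cG \subset \cH_B$ a typical subspace approximating $\Psi_B$ in the BSST sense~\cite{bennett02}.

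The combined code has $A$ transmit $V(U_m)$ on message $m$ while $B$ encodes half of $\ket{\Phi}$ via $\cE_u$. The receiver implements the two-stage decoder $\cD_m^{\mathrm{tot}}(\tau) := \ket{m}\bra{m}^{\hat M} \otimes \cD_u^{(q), U_m}\bigl(\sqrt{D_m(U)}\,\tau\,\sqrt{D_m(U)}\bigr)$: first the square-root measurement $\{D_m(U)\}$ to extract $\hat m$, then $B$'s decoder conditioned on side information $U_{\hat m}$. Three estimates then suffice: (i) by linearity of $\tr(D_m \cdot)$ in the input combined with $\bbmE_u \cE_u(\pi_{\cF_B}) = \pi_\cG^{\otimes n}$, the expected classical decoding error over both sources of randomness reduces, up to a vanishing BSST-approximation correction, to the Mosonyi bound $2^{-nc_1}$; (ii) the gentle measurement lemma~\cite{winter99} converts this into an $O(2^{-nc_1/2})$ trace-norm bound on the disturbance of $B$'s post-channel bipartite state under the classical POVM, which by linearity of the entanglement fidelity in the post-measurement state contributes at most $O(2^{-nc_1/2})$ to the fidelity loss; (iii) averaging over $U_m \sim p^{\otimes n}$ of the effective $B$-to-$C$ channel $\sigma \mapsto \cM^{\otimes n}(V(U_m) \otimes \sigma)$, with the classical register $X^n$ appended to the output, reproduces exactly $\cN_\cM^{\otimes n}$, whence Proposition~\ref{prop:random_ent_transm_no_csi_finite} supplies expected entanglement fidelity $\ge 1 - 2^{-nc_2}$. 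Summing these error contributions, pulling infima over $\fM$ inside expectations (permitted because both random coding statements already deliver universal bounds), and derandomizing via Markov's inequality produces deterministic $(n,M_1,M_2)$-ET codes of the required rate with universal fidelity tending to $1$ exponentially.

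The main technical obstacle is the coupling between the classical and quantum decoding in steps (i)--(ii): the classical measurement must be gentle on the \emph{actual} channel output, not on the nominal cq output the Mosonyi code was designed for. Since $\|\pi_\cG^{\otimes n} - \Psi_B^{\otimes n}\|_1$ need not be small in general, the rescue exploits the identity $\bbmE_u \cE_u(\pi_{\cF_B}) = \pi_\cG^{\otimes n}$ of the decoupling-based code together with the fact that only the specific linear functionals $\tr(D_m \cdot)$ enter the classical error estimate, an argument then closed by BSST-blocking at the $l$-letter level. This is precisely where the decoupling-based codes of Proposition~\ref{prop:random_ent_transm_no_csi_finite} become essential over Devetak-style private-classical constructions, as already anticipated in the remark following that proposition.
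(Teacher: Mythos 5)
Your overall architecture matches the paper's: combine the Mosonyi universal random cq codes (Proposition~\ref{prop:cq_random_coding_no_csi}) with the decoupling-based universal random entanglement transmission codes (Proposition~\ref{prop:random_ent_transm_no_csi_finite}), decode in two stages with a gentle-measurement estimate, and block/pad to pass to $l$-letter regions. However, the step you yourself flag as ``the main technical obstacle'' is a genuine gap, not a detail. Proposition~\ref{prop:random_ent_transm_no_csi_finite} only delivers rates given by coherent informations evaluated at \emph{maximally mixed} states on the chosen subspace $\cG$, and its averaging property gives $\bbmE_u\,\cE_u(\pi_{\cF})=\pi_{\cG}^{\otimes n}$, whereas your classical code is designed for the cq channel whose $B$-marginal is $\Psi_B^{\otimes n}$. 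Your rescue --- ``only the linear functionals $\tr(D_m\,\cdot)$ enter, closed by BSST-blocking'' --- is an assertion, not an argument: linearity only shows that the expected classical error equals the error against the cq channel with $B$-marginal $\pi_{\cG}^{\otimes n}$, which is \emph{not} the channel the Mosonyi code controls, and since (as you admit) $\pi_\cG^{\otimes n}$ and $\Psi_B^{\otimes n}$ are not trace-norm close, no continuity bound transfers the error estimate; the analogous problem appears on the quantum side, where claiming $\tfrac1n\log M_2\ge R_2+\delta$ with $R_2$ defined through $I(B\rangle CX,\omega)$ at $\Psi$ requires relating flat-state coherent informations to the one at $\Psi$. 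The paper sidesteps both issues: it first proves the case of a \emph{maximally entangled} $\Phi$ (Proposition~\ref{prop:codes_uninformed_users}), where $\bbmE_\Lambda\cE_\Lambda(\pi_\cF)=\pi_B^{\otimes n}$ matches the classical code's design channel exactly, and then reaches general $\Psi$ by decomposing $\Psi_B^{\otimes l}=\sum_i q(i)\pi_i$ into flat states on orthogonal eigenspaces, approximating $q$ by blocklength fractions (Lemma~\ref{lemma:approx_emp}), and concatenating per-eigenspace codes (Proposition~\ref{prop:codes_uninformed_users_2}). If you insist on your direct typical-subspace route, you must actually prove the transfer of both the classical error bound and the rates; as written this is missing.

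A second gap is the uniformity over $\fM$. You claim infima over $\fM$ can be pulled inside expectations ``because both random coding statements already deliver universal bounds,'' but this fails precisely at the combination step: the classical error relevant to the hybrid code is with respect to the channels $T'_{A,s\Lambda}$ whose $B$-input is the actual random encoding, and these relate to the design channels only through $\bbmE_\Lambda T'_{A,s\Lambda}=T_{A,s}^{\otimes n}$, an identity valid for each \emph{fixed} $s$; since $\sup_s\bbmE_\Lambda\le\bbmE_\Lambda\sup_s$ goes the wrong way, the per-$s$ expected fidelity bounds do not directly bound $\bbmE\,\inf_{s\in S}P^{ET}$. The paper closes this by taking a $\delta_n$-net of cardinality at most $2^{n\tilde c/4}$ in the channel set (Lemma~\ref{net_approximation}), applying a union bound over the net against the exponentially small per-$s$ error, and finishing by continuity of $P^{ET}$ in the channel; your proposal needs this (or an equivalent) step before the Markov-inequality derandomization can be invoked.
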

		The main technical steps for proving the above assertion is done in the proof of the following proposition.    
		\begin{proposition} \label{prop:codes_uninformed_users}
			Let $\fT := \{\cT_s\}_{s\in S} \subset \cC(\cK_A \otimes \cK_B, \cK_C)$, $\Phi \in \cS(\cK_B \otimes \cK_B)$ a pure maximally entangled state, $p \in \cP(\cX)$, $V: \cX \rightarrow \cS(\cK_A)$ a channel having only pure outputs. 
			For each $\delta > 0$ exists a number $n_0$ such that for each $n > n_0$ we find an $(n,M_1,M_2)$ Scenario-ET code $\cC$ with 
			\begin{enumerate}
				\item $\tfrac{1}{n} \log M_1 \ \geq \ \inf_{s \in S} \ I(X;C, \omega_s) - \delta$,
				\item $\tfrac{1}{n} \log M_2 \ \geq \ \inf_{s \in S} \ I_c(B\rangle CX, \omega_s) - \delta$, and
				\item $\inf_{s \in S} P^{ET}(\cC, \cT_s) \ \geq \ 1 - 2 ^{-nc}$
			\end{enumerate}
			where $c$ is a strictly positive constant. The entropic quantities on the right hand sides of the first two inequalities are evaluated on the states 
			\begin{align*}
			\omega_s := \omega(\cT_s,p,V,\Psi)  = \sum_{x \in \cX} p(x) \ket{x} \bra{x} \otimes \id_{\cK_B} \otimes \cT_s(V(x) \otimes \Phi)  &&(s \in S)
			\end{align*}
		\end{proposition}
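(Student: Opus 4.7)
The strategy is to follow the hybrid coding approach of \cite{yard08}, building $A$'s classical code and $B$'s quantum code separately and welding them together via a two-stage decoder, but with the channel-specific random ingredients of \cite{yard08} replaced by the universal random codes of Propositions \ref{prop:cq_random_coding_no_csi} and \ref{prop:random_ent_transm_no_csi_finite}. Since $\Phi$ is maximally entangled, $\tr_2 \Phi = \pi_{\cK_B}$; consequently the $XC$-marginal of $\omega_s$ is the cq state associated with the channel $W_s(x) := \cT_s(V(x) \otimes \pi_{\cK_B})$, so that $I(X;C,\omega_s)$ will be the mutual information $A$ can harness, while $I_c(B\rangle CX,\omega_s)$ is the (conditional) coherent information $B$ can harness once $A$'s codeword is known at the receiver.

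Concretely, I would first apply Proposition \ref{prop:cq_random_coding_no_csi} to the family $\{W_s\}_{s \in S}$ with input distribution $p$ to obtain a random $(n,M_1)$ classical code $\cC_1(U) = (U_m, D_m(U))_{m=1}^{M_1}$ with $U_m$ i.i.d.\ $\sim p^{\otimes n}$, rate $\frac{1}{n}\log M_1 \geq \inf_s I(X;C,\omega_s) - \delta$, and $\bbmE \sup_s \overline{e}(\cC_1(U), W_s^{\otimes n}) \leq 2^{-nc_1}$. Next, conditional on $A$'s codeword $u = (x_1,\ldots,x_n)$, the effective channel from $B$ to $C$ is $\cN_{s,u}(\sigma) := \cT_s^{\otimes n}(V(u) \otimes \sigma) = \bigotimes_i \cT_s^{x_i}(\sigma)$ with $\cT_s^a(\cdot) := \cT_s(V(a) \otimes \cdot)$. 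For $u$ in the $p$-typical set a letter-type / law-of-large-numbers argument yields $I_c(B^n \rangle C^n, \cN_{s,u}(\pi_{\cK_B}^{\otimes n})) \geq n\bigl(I_c(B\rangle CX,\omega_s) - \eta\bigr)$ uniformly in $s$, with $\eta \to 0$ as the typicality tolerance shrinks. I would then invoke Proposition \ref{prop:random_ent_transm_no_csi_finite} in the spirit of \cite{yard08}---applied letter-by-letter to the families $\{\cT_s^a\}_{s \in S}$ with $\cG = \cK_B$ and glued together through tensor products and permutations matching $u$---to obtain a random encoder $\cE_v$ independent of both $m$ and $u$ together with decoders $\cD_{v,u}$ satisfying (a) $\bbmE_v \cE_v(\pi_\cF) = \pi_{\cK_B}^{\otimes n}$, (b) $\frac{1}{n}\log M_2 \geq \inf_s I_c(B\rangle CX,\omega_s) - \delta$, and (c) expected worst-case entanglement fidelity $\geq 1 - 2^{-nc_2}$. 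The joint decoder first measures $\{D_m(U)\}$ to obtain $\hat m$ and then applies $\cD_{v,U_{\hat m}}$; property (a) makes $A$'s channel in expectation exactly $W_s$, so the classical step errs as in the classical code above, while a gentle-measurement estimate shows that the classical POVM perturbs the state relevant to the quantum decoder by $o(1)$ in trace distance. Linearity in the random choices, the union bound in $s$, and a final derandomisation of $(U,v)$ then produce the deterministic code claimed in the proposition.

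The hard part will be the consistency between the two code pieces, given that $B$'s encoder $\cE_v$ must be a single channel independent of $u$ while $\cN_{s,u}$ depends on $u$ through both the letter frequencies $N(a|u)$ and the letter positions. This tension is reconciled by two key features of Proposition \ref{prop:random_ent_transm_no_csi_finite}: the expected encoder output is exactly $\pi_{\cK_B}^{\otimes n}$, which is permutation-invariant and hence blind to the letter positions, and the exponential decay of the fidelity error is uniform over the channel family, which via typicality absorbs the $u$-dependence at only $\delta$-rate cost. A secondary technical obstacle is the gentle-measurement / Alicki--Fannes estimate needed to bound the disturbance caused by the classical POVM on the quantum information that is still to be decoded; this is routine once the classical error is shown to be exponentially small.
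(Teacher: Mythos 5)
There is a genuine gap in the quantum half of your construction. You propose to treat, for each classical codeword $u$, the conditional channel $\cN_{s,u}=\bigotimes_i \cT_s^{u_i}$ (with $\cT_s^{a}(\cdot)=\cT_s(V(a)\otimes\cdot)$), lower bound its coherent information by typicality of $u$, and then obtain a $u$-independent encoder by applying Proposition \ref{prop:random_ent_transm_no_csi_finite} ``letter-by-letter \dots glued together through tensor products and permutations matching $u$''. Proposition \ref{prop:random_ent_transm_no_csi_finite} only supplies codes for the \emph{memoryless} compound channel $\{\cN_t^{\otimes n}\}$; it says nothing about position-dependent product channels $\bigotimes_i \cT_s^{u_i}$, and per-letter entanglement-transmission codes cannot be concatenated at rate $I_c$ (block coding over many uses of the \emph{same} channel is essential). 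The natural repair --- partition the $n$ positions by the letter type of $u$ and code each sub-block for the compound family $\{\cT_s^{a}\}_s$ --- makes the encoder depend on $u$ (equivalently on $A$'s message), which $B$ is not allowed. Your proposed resolution, that $\bbmE_v\,\cE_v(\pi_\cF)=\pi_{\cK_B}^{\otimes n}$ is permutation-invariant, only fixes the \emph{classical} decoding step (it guarantees that $A$ sees the channel $W_s^{\otimes n}$ on average); it does not produce, for each $u$, a quantum decoder with good fidelity for $\cN_{s,u}$ from a single $u$-blind encoder. A further, smaller issue: your ``union bound in $s$'' is not available as stated since $S$ may be infinite; one needs the $\tau$-net of Lemma \ref{net_approximation} plus continuity of $P^{ET}$ to pass from per-$s$ expectation bounds to a bound on $\bbmE\inf_{s}P^{ET}$.

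The paper avoids conditioning on $u$ altogether: it defines the single-letter effective channel
\begin{align*}
\cT_{B,s}(\tau)\ :=\ \sum_{x\in\cX} p(x)\, \cT_s(V(x)\otimes\tau)\otimes \ket{x}\bra{x},
\end{align*}
whose coherent information at maximally mixed input on $\cK_B$ is exactly $I_c(B\rangle CX,\omega_s)$ (here maximal entanglement of $\Phi$ is used), and applies Proposition \ref{prop:random_ent_transm_no_csi_finite} to the compound family $\{\cT_{B,s}\}_{s\in S}$. The point is that, because the classical codewords $U_m$ are i.i.d.\ with law $p^{\otimes n}$, averaging over $U_m$ reproduces $\cT_{B,s}^{\otimes n}$ applied to the encoded state \emph{exactly} (no typicality needed), and the classical register $\ket{x}\bra{x}$ in the output of $\cT_{B,s}$ models the receiver's knowledge of $u_{\hat m}$ after the (gently measuring) classical decoder. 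Together with $\bbmE_\Lambda\cE_\Lambda(\pi_\cF)=\pi_B^{\otimes n}$ for the classical step, this yields a genuinely $u$-independent encoder whose fidelity guarantee transfers directly to the hybrid code; the worst-case-over-$S$ statement is then obtained by the net argument mentioned above. If you want to salvage your conditional route, you would need a universal entanglement-transmission result for non-stationary (state-sequence-dependent) compound channels with a sender ignorant of the sequence, which is not among the tools available in this paper.
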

		Before we give a proof of the above proposition, we state a net approximation result which is used therein. We use the diamond norm $\|\cdot\|_\diamond$ defined on the set of maps from $\cL(\cH)$ to $\cL(\cK)$ by
		\begin{align}
		\|\cN\|_\diamond := \underset{a \in \cL(\cH \otimes \cH)}{\max}\|\id_{\cH} \otimes \cN(a)\|_1 &&(\cN \in \cL(\cH)).
		\end{align}
		We will use
		\begin{lemma}[\cite{bjelakovic08}, Lemma 5.2] \label{net_approximation}
			Let $\fI \subset \cC(\cH, \cK)$. For each $\theta > 0$ there is a set $\fI_\theta \subset \fI$ such that the following conditions are fulfilled
			\begin{enumerate}
				\item $|\fI_\theta| \ \leq (6/\theta)^{2 (\dim \cK \cdot \dim\cH)^2}$, and 
				\item for each $\cN \in \fI$ exists $\cN' \in \fI_\theta$ such that $\|\cN - \cN'\|_\diamond \leq \theta$.
			\end{enumerate}
		\end{lemma}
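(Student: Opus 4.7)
The proof is a standard volume/covering argument. The plan is to embed $\cC(\cH,\cK)$ as a bounded subset of a finite-dimensional real normed space (the ambient space of all linear maps, equipped with the diamond norm), cover that bounded set by sufficiently many small diamond-balls, and finally move the centres of those balls into $\fI$ at the cost of doubling the radius.

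First, I would view every $\cN \in \cC(\cH,\cK)$ as an element of the complex vector space $\cL(\cL(\cH), \cL(\cK))$ of all linear maps from $\cL(\cH)$ to $\cL(\cK)$. This space has complex dimension $(\dim\cH\cdot\dim\cK)^2$, hence real dimension $d := 2(\dim\cH\cdot\dim\cK)^2$, which is exactly the exponent appearing in the claim. On this space, the diamond norm is a genuine norm, and every c.p.t.p. map satisfies $\|\cN\|_\diamond = 1$, so in particular $\fI \subset \cC(\cH,\cK)$ is contained in the closed diamond-ball of radius $1$ around the origin.

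Second, I would apply the standard volume lemma for finite-dimensional real normed spaces: in a $d$-dimensional real normed space, a closed ball of radius $r$ contains a maximal $\epsilon$-separated set of cardinality at most $(1+2r/\epsilon)^d$. The proof is the usual packing argument — the disjoint open $\epsilon/2$-balls around such a set sit inside the ball of radius $r+\epsilon/2$ — and only uses the scaling property $\mathrm{vol}(\lambda B) = \lambda^d \mathrm{vol}(B)$, which holds for any norm ball by translation invariance of Lebesgue measure. Picking a maximal $(\theta/2)$-separated subset $\{\cM_1,\dots,\cM_N\}$ of $\cC(\cH,\cK)$ (viewed inside the unit diamond-ball) gives $N \leq (1+4/\theta)^d \leq (6/\theta)^d$ for $\theta \in (0,2]$; maximality further forces the diamond-balls $B_\diamond(\cM_j,\theta/2)$ to cover $\cC(\cH,\cK)$, and in particular to cover $\fI$.

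Third, to produce $\fI_\theta \subset \fI$ (as the statement requires), for each index $j$ with $B_\diamond(\cM_j,\theta/2) \cap \fI \neq \emptyset$ I pick one representative $\cN_j \in B_\diamond(\cM_j,\theta/2) \cap \fI$, and let $\fI_\theta$ be the set of these $\cN_j$. Then $|\fI_\theta| \leq N \leq (6/\theta)^{2(\dim\cH\cdot\dim\cK)^2}$, and for any $\cN \in \fI$, if $\cN \in B_\diamond(\cM_j, \theta/2)$ then the triangle inequality yields $\|\cN - \cN_j\|_\diamond \leq \|\cN - \cM_j\|_\diamond + \|\cM_j - \cN_j\|_\diamond \leq \theta/2 + \theta/2 = \theta$, as required.

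The only real subtlety is keeping the constants in line: the factor $6$ in $(6/\theta)^d$ must absorb both the ratio $2r/\epsilon = 4/\theta$ coming from the volume bound (with $r=1$, $\epsilon = \theta/2$) and the doubling of the radius forced by replacing the abstract centres $\cM_j$ with members $\cN_j$ of $\fI$ via the triangle inequality. Taking the separation radius equal to $\theta/2$ balances both effects and yields the stated bound; any other choice would force either a larger constant than $6$ or a weaker approximation than $\theta$.
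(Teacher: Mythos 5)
Your proof is correct and is essentially the standard argument: the paper itself does not prove this lemma but imports it from \cite{bjelakovic08}, where the same volume/packing bound in the $2(\dim\cH\cdot\dim\cK)^2$-real-dimensional space of linear maps, followed by relocating the net points into $\fI$ via the triangle inequality, is used. The bookkeeping of constants (separation $\theta/2$, radius doubling, $1+4/\theta\leq 6/\theta$ for $\theta\leq 2$, with larger $\theta$ handled trivially by a singleton) is handled correctly.
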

		\begin{proof}
			Fix $\delta > 0$, and set 
			$R_1 
			:= \underset{s \in S}{\inf} \ I(X;C, \omega_s), \hspace{.3cm}$, and 
			$R_2 
			:= \underset{s \in S}{\inf} \ I(B\rangle XC, \omega_s).$
			We assume that $R_1 - \delta$ and $R_2 - \delta$ are both non-negative, otherwise the results follow either by trivial coding or reduction to a case of channel coding with a single sender. We define the state
			$
			\pi_B := \tr_{\cK_B} \Phi,
			$and a classical-quantum channel $T_{A,s}: \cX \rightarrow \cS(\cH_C)$ with outputs 
			$T_{A,s}(x) := \cT_s(V(x) \otimes \pi_B)$ 
			for each $s \in S$.
			If we fix the blocklength $n$ to be sufficiently large, we find by virtue of Proposition \ref{prop:cq_random_coding_no_csi} a random $(n,M_1)$ message transmission code $\cC_A(U) := (U_m, \hat{D}_m(U))_{m=1}^{M_1}$ for 
			the compound classical-quantum channel $\{T_{A,s}\}_{s \in S}$, where $U = (U_1,\dots,U_{M_1})$ is an i.i.d. random family with generic distribution $p^{\otimes n}$, rate
			$ 
			\frac{1}{n} \log M_1 \ \geq \ I(X;C, \omega_s) - \delta \ = \ R_1 - \delta,
			$ 
			and expected average message transmission error
			\begin{align}
			\bbmE \ \overline{e}(\cC_A(U), T_{A,s}^{\otimes n}) \leq 2^{-nc_1} \label{prop:codes_uninformed_users_initial_exp_av_err}
			\end{align}
			where $c_1$ is a strictly positive constant.
			Moreover we define for each $s$ a c.p.t.p. map $\cT_{B,s} \in \cC(\cK_B, \cK_B \otimes \bbmC^{|\cX|})$ by
			\begin{align*}
			\cT_{B,s}(\tau) := \sum_{x \in \cX} p(x) \cT_s(V(x) \otimes \tau) \otimes \ket{x} \bra{x} &&(\tau \in \cL(\cK_B)). 
			\end{align*}
			Under the assumption of large enough blocklength $n$, Proposition \ref{prop:random_ent_transm_no_csi_finite} assures us, that there exists a random $(n, M_2)$ entanglement transmission code 
			$\cC_B(\Lambda) := (\cE_{\Lambda}, \widetilde{\cD}_{\Lambda})$ for the compound quantum channel $\{\cT_{B,s}\}_{s \in S}$ where $\Lambda$ is supported on a 
			finite set $A$, and which has rate
			$\frac{1}{n} \log M_2 \ \geq \ R_2 - \delta$, 
			such, that with a positive constant $c_2 > 0$ the expected entanglement fidelity can be bounded as
			\begin{align}
			\bbmE_\Lambda \ \left[ \ \underset{s \in S}{\inf} \  F_e(\pi_{\cF}, \widetilde{\cD}_\Lambda \circ \cT_{B,s}^{\otimes n} \circ \cE_\Lambda) \right]\ \geq \ 1 - 2^{-nc_2}. \label{prop:codes_uninformed_users_exp_enc_dist}
			\end{align}
			In addition, the expected density matrix resulting from the random encoding procedure is maximally mixed, i.e. 
			$ 
			\bbmE_{\Lambda} \cE_{\Lambda}(\pi_\cF) =  \pi_B^{\otimes n}.
			$ 
			For each pair $(u, \alpha) \in \cX^{nM_1} \times A$ of realizations of $(U, \Lambda)$, we define an $(n,M_1,M_2)$ ET code 
			$ 
			\cC(u,\alpha) := (V(u_m),\cE_\alpha, \cD_{m,\alpha u})_{m=1}^{M_1} 
			$ 
			using the decoding operations
			\begin{align*}
			\cD_{m,\alpha u}(x) \ := \ \widetilde{\cD}_{\alpha}\circ \hat{\cD}_m(u)(x) &&(x \in \cL(\cK_{C}^{\otimes n})).
			\end{align*}
			with $\hat{\cD}_m(u) := \hat{D}_m^{\tfrac{1}{2}}(\cdot)\hat{D}_m^{\tfrac{1}{2}}$.
			Each of the codes defined above already is a Scenario-ET code of suitable rates for classical message and entanglement transmission. To complete the proof of the proposition, we will lower-bound the  the expected ET fidelity 
			of the random code $\cC(U,\Lambda)$. Fix, for the moment the channel state $s$. Let $T'_{A,s\alpha}$ be the cq channel defined by the states
			\begin{align}
			T'_{A,s\alpha}(x^n) := \cT_{s}^{\otimes n}(V^{\otimes n}(x^n) \otimes  \cE_{\alpha}(\pi_\cF)) &&(\alpha \in A, x^n \in \cX^n). \label{prop:codes_uninformed_users_eff_cq_def_2}
			\end{align}
			Averaging over the random choice of $\alpha$ the transmission statistics of $T_{A,s}^{\otimes n}$ is reproduced. Indeed, for each $x^n \in \cX^n$ 
			\begin{align}
			\bbmE_{\Lambda} T'_{A,s\Lambda}(x^n) \ = \ \cT_{s}^{\otimes n}(V^{\otimes n}(x^n) \otimes  \bbmE_{\Lambda} \cE_{\alpha}(\pi_\cF)) \ = \  T_{A,s}^{\otimes n}(x^n). \label{prop:codes_uninformed_users_exp_channel_dist}
			\end{align}
			Since the average transmission error is an affine function of the cq channel, we have for each $u \in \cX^{nM_1}$
			\begin{align}
			\bbmE_U\bbmE_{\Lambda} \ \overline{e}(\cC(u), T'_{A,s\Lambda})  \
			& =  \ \bbmE_U \overline{e}(\cC(u), T_{A,s}^{\otimes n}) \leq 2^{-nc_1}. \label{prop:codes_uninformed_users_class_err}
			\end{align}
			Define the cq channel $T_{s,\alpha}$ defined by 
			\begin{align}
			T'_{s\alpha}(x^n) := \id_{\cK_B}^{\otimes n} \otimes \cT_s^{\otimes n}(V^{\otimes n}(x^n) \otimes (\id_{\cK_B}^{\otimes n} \otimes \cE_\alpha(\Phi))) &&(x^n \in \cX^n, \alpha \in A)
			\end{align}
			(note that the reduction of $T'_{s\alpha}(x^n)$ is, in fact, $T'_{A,s\alpha}$.) For each classical message $m \in [M_1]$, it holds
			\begin{align}
			\bbmE_{\Lambda} \bbmE_U \ \tr\left(\id_{\cK_B}^{\otimes n}\otimes \hat{\cD}(U)(T'_{s\Lambda}(U_m))\right) \ 
			& = \ \bbmE_{\Lambda}\bbmE_U \ \tr \hat{D}_m(U) T'_{A,s\Lambda}(U_m) \geq 1 -  2^{-nc_1}.
			\end{align}
			The above inequality stems from the bound in (\ref{prop:codes_uninformed_users_class_err}) together with the observation, that by symmetry of the random selection procedure for the codewords, 
			the expectation of the one-word message transmission error does not depend on the individual message $m$. If we define 
			\begin{align}
			\gamma_s(u,\alpha) \ := \ 1 - \tr (\bbmeins_{\cK_B}^{\otimes n} \otimes \hat{D}_m(u))T'_{A,s \alpha}(u_m)  \label{prop:codes_uninformed_users_class_exp_err}
			\end{align}
			we have by (\ref{prop:codes_uninformed_users_class_exp_err}) 
			\begin{align}
			\bbmE_\Lambda \bbmE_U \ \gamma_s \ \leq 2^{-nc_1}. \label{prop:codes_uninformed_users_class_err_2}
			\end{align}
			Define for each realization $(u, \alpha)$ of $(U,\Lambda)$
			\begin{align}
			\Gamma^{(s)}_{mm'}(u, \alpha) \ 
			&:= \ \id_{\cK_B}^{\otimes n} \otimes \hat{\cD}_m(U)( T'_{A,s\alpha}(u_m))   &&(m,m' \in [M_1]), \\
			\hat{\Gamma}_m^{(s)}(u,\alpha) \
			&:= \ \sum_{m'=1}^{M_1} \Gamma_{mm'}^{(s)}(u, \alpha) \otimes \ket{u_{m'}}\bra{u_{m'}}, \ \text{and} \\
			\Gamma'^{(s)}(x^n, \alpha) \ &:= \ \id_{\cK_B}^{\otimes n} \otimes \cT_s^{\otimes n}(V^{\otimes n}(x^n) \otimes (\cE_{\alpha}(\Phi)) \otimes \ket{x^n} \bra{x^n} &&(x^n \in \cX^n).
			\end{align}
			Note, that if $\tilde{U}$ is an $\cX^n$-valued random variable with distribution $p^{\otimes n}$, that 
			\begin{align}
			\bbmE_{\tilde{U}}\  \Gamma'^{(s)}(\tilde{U}, \alpha) \ = \ (\id_{\cK_B}^{\otimes n} \otimes \cT_{B,s}^{\otimes n} \circ \cE_{\alpha})(\Phi) \label{prop:codes_uninformed_users_trbs_est}
			\end{align}
			holds by definition of $\cT_{B,s}$. By the gentle measurement lemma (see Lemma \ref{lemma:gentle_measurement} in  Appendix \ref{sect:auxiliary_results}), we have
			\begin{align}
			\|\id_{\cK_B}^{\otimes n} \otimes  (\id_{\cH_C}^{\otimes n} - \hat{\cD}_m(u))T'_{A,s\alpha}(u_m)\|_1 \ \leq \ 3 \sqrt{\gamma_s(u, \alpha)}.
			\label{prop:codes_uninformed_users_normone_deviation}
			\end{align}
			Taking expectations on both sides of the inequality in (\ref{prop:codes_uninformed_users_normone_deviation}), we arrive at
			\begin{align*}
			\bbmE_\Lambda \bbmE_U \|\id_{\cK_B}^{\otimes n} \otimes  (\id_{\cH_C}^{\otimes n} - \hat{\cD}_m(U))T'_{A,s\alpha}(U_m)\|_1\ 
			\leq \ 3 \bbmE_\Lambda \bbmE_U \sqrt{\gamma_s(u, \alpha)}  
			\leq 3 \sqrt{\bbmE_\Lambda \bbmE_U \gamma_s(u, \alpha)} 
			\leq 3\sqrt{2^{-nc_1}}.
			\end{align*}
			where the second inequality above is by Jensen's inequality together with concavity of the square-root function. The last inequality is by the estimate in (\ref{prop:codes_uninformed_users_class_err_2}). As a consequence of 
			these bounds, we have 
			\begin{align}
			\bbmE_U \bbmE_\Lambda  \ \|\hat{\Gamma}^{(s)}_m(U, \Lambda) - \Gamma'^{(s)}(U_m, \Lambda)\|_1 \ \leq \ 3 \cdot \sqrt{2^{-nc_1}} + 2^{-nc_1} \ \leq \ 4 \cdot \sqrt{2^{-nc_1}}. \label{prop:codes_uninformed_users_tracedistance}
			\end{align}
			for each message $m \in [M_1]$. Moreover, we can bound 
			\begin{align}
			\bbmE_{\Lambda}\bbmE_U \ F\left(\Phi, \id_{\cK_B}^{\otimes n} \otimes \widetilde{\cD}_{\Lambda}(\hat{\Gamma}^{(s)})\right) 
			& \ \geq \ \bbmE_{\Lambda}\bbmE_U \ \left( F\left(\Phi, \id_{\cK_B}^{\otimes n} \otimes \widetilde{\cD}_{\Lambda}(\Gamma'^{(s)})\right)  
			-  \ \|\id_{\cK_B}^{\otimes n} \otimes \widetilde{\cD}(\Gamma^{(s)}(U,\Lambda) - \Gamma'^{(s)}(U_m, \Lambda)) \|_1\right) \nonumber \\
			& \geq \bbmE_\Lambda \bbmE_U  \ \left( F_e\left(\Phi, \tilde{\cD}_{\Lambda}\circ \cT_{B,s}^{\otimes n} \circ \cE_{\Lambda}\right) 
			- \ \|\Gamma^{(s)}(U,\Lambda) - \Gamma'^{(s)}(U_m, \Lambda) \|_1\right)    \nonumber \\
			& \geq \bbmE_{\Lambda} \ F_e(\Phi, \tilde{\cD}_{\Lambda}\circ \cT_{B,s}^{\otimes n} \circ \cE_{\Lambda}) - 4\cdot \sqrt{2^{-nc_1}} \nonumber  \\
			& \geq 1 - 2^{-nc_2} - 4\cdot \sqrt{2^{-nc_1}}. \label{prop:codes_uninformed_users_quant_error}
			\end{align}
			The first line above is by application of Lemma \ref{lemma:auxiliary_results_yard_fidelity_1} which can be found in Appendix \ref{sect:auxiliary_results}. The second is by using the equality in (\ref{prop:codes_uninformed_users_trbs_est}) together with monotonicity of the trace 
			distance under taking partial traces. The third 
			is by (\ref{prop:codes_uninformed_users_tracedistance}). The last estimate comes from (\ref{prop:codes_uninformed_users_exp_enc_dist}).
			Putting all the estimates together, we can bound the expected ET fidelity. We have for each $m \in [M_1]$
			\begin{align}
			\bbmE_U \bbmE_\Lambda \ P^{ET}\left( \cC(U,\Lambda), \cT_s^{\otimes n}, m\right) 
			& \ = \ \bbmE F\left(\ket{m}\bra{m} \otimes \Phi, \id_{\cK_B}^{\otimes n} \otimes \cD_{U, \Lambda}\circ \cT_s^{\otimes n} \circ (\id_{\cK_A}^{\otimes n} \otimes \cE_\Lambda)(V^{\otimes n}(U_m) \otimes \Phi)\right) \nonumber \\
			& \ \geq 1 - \bbmE \|\ket{m} \bra{m} - \tr_{\cF} \circ \cD_{U\Lambda} \circ \cT_s^{\otimes n}(V^{\otimes n}(U_m) \otimes \cE_\Lambda(\pi))\|_1 \nonumber \\
			&  - 3 \cdot \left(1 - \bbmE sF(\Phi, \id_{\cK_B}^{\otimes n} \otimes \tr_{\bbmC^{M_1}} \circ \cD_{U,\Lambda} \circ \cT_s^{\otimes n} \circ (\id_{\cK_A}^{\otimes n} \otimes \cE_{\Lambda})(V^{\otimes n}(U_m) \otimes \Phi)  \right) 
			\nonumber \\
			& \geq 1 - 2 \bbmE \overline{e}(\cC, \cT'_{A,s \Lambda}) - 3 \left(1 - F(\Phi, \bbmE_{\Lambda}(\id_{\cK_B}^{\otimes n} \otimes \tilde{\cD}_\Lambda(\bbmE_U \hat{\Gamma}^{(s)}(U,\Lambda))))\right) \nonumber \\
			& \geq 1 - 2 ^{-nc_1} + 2 \cdot  2^{-nc_2} + 4 \sqrt{2^{-nc_1} + 2^{-nc_2}}. 
			\end{align}
			The first inequality is by Lemma \ref{lemma:auxiliary_results_yard_fidelity_2} to be found in Appendix \ref{sect:auxiliary_results}. The third inequality is by inserting the bounds from (\ref{prop:codes_uninformed_users_class_err}) and (\ref{prop:codes_uninformed_users_quant_error}).
			Consequently, we have for each $s \in S$
			\begin{align}
			\bbmE P^{ET}(\cC(U,\Lambda), \cT_s^{\otimes n}) \geq  1 - 2^{-n \frac{\tilde{c}}{2}} \label{prop:codes_uninformed_users_pre_last}
			\end{align}
			with  $\tilde{c} := \min\{c_1,c_2\}$ provided, that $n$ is large enough. The inequality in (\ref{prop:codes_uninformed_users_pre_last}, in fact provides an individual lower bound on the expected ET fidelity for each channel state 
			$s$. Since we aim for a lower bound on the expected worst-case fidelity over the set $S$, we include another step of approximation. We derive from the individual bounds on the expected ET fidelity of the 
			random code for each $s$ a universal bound, i.e. a bound on the expected worst-case ET fidelity of the code. We assume, for each $n \in \bbmN$,  $\tilde{S}_n$ to be a subset of $S$, 
			such that $\{\cT_s\}_{s \in \tilde{S}}$ is a $\eta_n$-net for the original set of channels
			generating the transmission with $\delta_n := 2^{-n\hat{c}}$, i.e. to each $s \in S$ exist an $\tilde{s} \in \tilde{S}_n$, such that $\|\cT_s - \cT_{\tilde{s}}\|_{\diamond} \ \leq \ \delta_n$. 
			We assume moreover, that the cardinality of each of these sets is bounded by $|\tilde{S}_n| \ \leq \ 2^{n \tfrac{\tilde{c}}{4}}$. Note, that such sets indeed exist by Lemma \ref{net_approximation}. We have 
			\begin{align}
			\bbmE_{U}\bbmE_{\Lambda} \left( \frac{1}{|\tilde{S}|}\sum_{s \in \tilde{S}} \  P^{ET}(\cC(U,\Lambda), \cT_s^{\otimes n}) \right)\
			&= \ \frac{1}{|\tilde{S}|}\sum_{s \in \tilde{S}} \bbmE_U \bbmE_\Lambda \  P^{ET}(\cC(U,\Lambda), \cT_s^{\otimes n})  
			\geq \ 1-2^{-n \tfrac{\tilde{c}}{2}} \label{prop:codes_uninformed_users_pre_last_last}
			\end{align}
			by (\ref{prop:codes_uninformed_users_pre_last}), which implies,  
			\begin{align}
			\bbmE_{U} \bbmE_{\Lambda} \  \underset{s \in \tilde{S}_n}{\min}\ P^{ET}(\cC(U,\Lambda), \cT_s^{\otimes n}) \ \geq 1 \ - |\tilde{S}_n| \ \cdot 2^{-n \tfrac{\tilde{c}}{2}} \geq 1 - 2^{-n \tfrac{\tilde{c}}{4}}.
			\end{align}
			for each sufficiently large $n$. The rightmost inequality above follows from the cardinality bound in 
			(\ref{prop:codes_uninformed_users_pre_last}). By continuity of $P^{ET}$, we have
			\begin{align}
			\bbmE_{U} \bbmE_\Lambda \  \underset{s \in S}{\inf} \ P^{ET}(\cC(U,\Lambda), \cT_s^{\otimes n}) \ 
			\geq \ \bbmE_{U} \bbmE_\Lambda \ \underset{s \in S}{\inf} \ P^{ET}(\cC(U,\Lambda), \cT_s^{\otimes n}) - n \delta_n  
			\ \geq \ 1- 2^{-n c}
			\end{align}
			with a strictly positive constant $c$. We are done.
		\end{proof}
		\newpage 
		Next, we prove a generalization of Proposition \ref{prop:codes_uninformed_users}, where we drop the condition, of $\Phi$ being a maximally entangled state. 
		\begin{proposition} \label{prop:codes_uninformed_users_2}
			Let $\fT := \{\cT_s\}_{s\in S} \subset \cC(\cK_A \otimes \cK_B, \cK_C)$, $\Psi \in \cS(\cK_B \otimes \cK_B)$ a pure state, $p \in \cP(\cX)$, $V: \cX \rightarrow \cS(\cK_A)$ a channel with pure outputs. For each $\delta > 0$ 
			exists a number $n_0$ such that for each $n > n_0$ we find an $(n,M_1,M_2)$ ET code $\cC$ with 
			\begin{enumerate}
				\item $\tfrac{1}{n} \log M_1 \ \geq \ \inf_{s \in S} \ I(X;C, \omega_s) - \delta$,
				\item $\tfrac{1}{n} \log M_1 \ \geq \ \inf_{s \in S} \ I(B\rangle CX, \omega_s) - \delta$, and
				\item $\inf_{s \in S} P^{ET}(\cC, \cT_s) \ \geq \ 1 - 2 ^{-nc}$
			\end{enumerate}
			where $c$ is a strictly positive constant. 
		\end{proposition}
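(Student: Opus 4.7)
The plan is to reduce the general pure-state case to the maximally-entangled-state case handled by Proposition \ref{prop:codes_uninformed_users} via a typical-subspace (BSST-type) approximation, combined with a two-level block structure $N=k\ell$. Let $\rho_B := \tr_{\cK_B} \Psi$ denote the reduction of $\Psi$ to one copy of $\cK_B$. For each $\eta>0$ and all sufficiently large $\ell$, I would take $\Pi_\ell$ to be the $\eta$-typical projector of $\rho_B^{\otimes\ell}$ on $\cK_B^{\otimes\ell}$ and define the projected-and-normalised pure state
\begin{align*}
\ket{\hat{\Psi}_\ell} \ := \ \frac{(\bbmeins\otimes\Pi_\ell)\ket{\Psi}^{\otimes\ell}}{\|(\bbmeins\otimes\Pi_\ell)\ket{\Psi}^{\otimes\ell}\|}.
\end{align*}
Standard typical-subspace estimates together with the gentle measurement lemma give $\|\hat{\Psi}_\ell - \Psi^{\otimes\ell}\|_1 \to 0$, $\tfrac{1}{\ell}\log(\tr\Pi_\ell) \to S(\rho_B)$, and that every Schmidt coefficient of $\hat{\Psi}_\ell$ lies in an interval of the form $[2^{-\ell(S(\rho_B)+\eta')/2}, 2^{-\ell(S(\rho_B)-\eta')/2}]$ with $\eta' \to 0$. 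Consequently there exists a pure \emph{maximally entangled} state $\Phi_\ell$ of Schmidt rank $\tr\Pi_\ell$ supported in $\Pi_\ell \otimes \bbmeins$ with $\|\hat{\Psi}_\ell - \Phi_\ell\|_1 \to 0$.

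Next I would invoke Proposition \ref{prop:codes_uninformed_users} on the compound QMAC $\{\cT_s^{\otimes\ell}\}_{s\in S}$ with maximally entangled state $\Phi_\ell$, distribution $p^{\otimes\ell}$, and cq-channel $V^{\otimes\ell}$, using $k$ outer channel uses. For all sufficiently large $k$ this yields a random $(k\ell,M_1,M_2)$ Scenario-ET code with exponentially small worst-case error over $s\in S$, and rates
\begin{align*}
\tfrac{1}{k\ell}\log M_1 \ \geq \ \tfrac{1}{\ell}\inf_{s\in S} I(X^\ell;C^\ell,\omega_s^{(\ell)}) - \tfrac{\delta}{2}, \qquad
\tfrac{1}{k\ell}\log M_2 \ \geq \ \tfrac{1}{\ell}\inf_{s\in S} I_c(B^\ell\rangle C^\ell X^\ell,\omega_s^{(\ell)}) - \tfrac{\delta}{2},
\end{align*}
where $\omega_s^{(\ell)} := \omega(\cT_s^{\otimes\ell}, p^{\otimes\ell}, V^{\otimes\ell}, \Phi_\ell)$. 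Monotonicity of the trace norm under c.p.t.p. maps and the convexity of the norm under the classical mixture over $x^\ell$ give $\|\omega_s^{(\ell)} - \omega_s^{\otimes\ell}\|_1 \leq \|\Phi_\ell - \Psi^{\otimes\ell}\|_1$ uniformly in $s$, with $\omega_s^{\otimes\ell} := \omega(\cT_s^{\otimes\ell},p^{\otimes\ell},V^{\otimes\ell},\Psi^{\otimes\ell})$.

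The last step is to upgrade these per-use rate bounds to the target quantities $\inf_s I(X;C,\omega_s)$ and $\inf_s I_c(B\rangle CX,\omega_s)$. Using additivity of both the classical mutual information and the coherent information on product states, together with the Alicki-Fannes-Winter continuity bound for the coherent information and the analogous continuity of quantum mutual information, one obtains
\begin{align*}
\Bigl|\tfrac{1}{\ell} I(X^\ell;C^\ell,\omega_s^{(\ell)}) - I(X;C,\omega_s)\Bigr| \ + \ \Bigl|\tfrac{1}{\ell} I_c(B^\ell\rangle C^\ell X^\ell,\omega_s^{(\ell)}) - I_c(B\rangle CX,\omega_s)\Bigr| \ \leq \ \xi(\ell),
\end{align*}
with $\xi(\ell)\to 0$ uniformly in $s$. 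Choosing $\ell$ large enough that $\xi(\ell)\leq \delta/2$ and then picking $k$ large enough for Step~2 to apply yields the desired $(N,M_1,M_2)$ ET code with $N=k\ell$.

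The step I expect to be delicate is the \emph{uniform} convergence $\xi(\ell)\to 0$ in Step~3 over a possibly infinite channel set $S$. The resolution is the same as in the final approximation step of the proof of Proposition \ref{prop:codes_uninformed_users}: the Alicki-Fannes-Winter constants depend only on $\log\dim\cK_C$ and $\log\dim\cK_B$, which are fixed, so the bound $\xi(\ell)$ is $s$-independent. Should any residual channel-dependent estimate appear, it can be controlled by a diamond-norm net via Lemma \ref{net_approximation} of cardinality $2^{o(\ell)}$, absorbed into the exponential error decay produced by Proposition \ref{prop:codes_uninformed_users} just as before.
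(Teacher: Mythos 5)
Your reduction to Proposition \ref{prop:codes_uninformed_users} has a genuine gap at its key approximation step: the claim that, because every Schmidt coefficient of $\hat{\Psi}_\ell$ lies in $[2^{-\ell(S(\rho_B)+\eta')/2},2^{-\ell(S(\rho_B)-\eta')/2}]$, there is a maximally entangled state $\Phi_\ell$ on the typical subspace with $\|\hat{\Psi}_\ell-\Phi_\ell\|_1\to 0$. This is false whenever the spectrum of $\rho_B$ is not flat. Multiplicative closeness of each Schmidt coefficient to $1/d$ (with $d=\tr\Pi_\ell$) does not give closeness of the states: for two pure states sharing a Schmidt basis the fidelity is $F(\hat{\Psi}_\ell,\Phi_\ell)=\tfrac{1}{d}\bigl(\sum_{i}\sqrt{\lambda_i}\bigr)^2$, and a type-class computation shows this decays exponentially in $\ell$ for any fixed typicality parameter (the exponent is of order $\eta^2$, coming from the gap between $\max_q H(q)$ and $\max_q\,[H(q)-D(q\|p)]$ over the typical shell); shrinking $\eta$ with $\ell$ does not help, since the CLT spread of $-\log$ of the Schmidt weights is of order $\sqrt{\ell}$, so one cannot simultaneously capture almost all of the weight and flatten the spectrum. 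Consequently $\|\Phi_\ell-\Psi^{\otimes\ell}\|_1$ tends to $2$, not $0$, the bound $\|\omega_s^{(\ell)}-\omega_s^{\otimes\ell}\|_1\leq\|\Phi_\ell-\Psi^{\otimes\ell}\|_1$ becomes vacuous, and the Alicki--Fannes--Winter step in your Step~3 has no small parameter to work with. (The issue you flagged as delicate, uniformity in $s$, is indeed harmless; the real problem is upstream.)

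Your overall architecture — replace $\Psi^{\otimes\ell}$ by a maximally entangled state on a typical subspace and feed it into Proposition \ref{prop:codes_uninformed_users} — is the BSST route used for single-sender compound channels in \cite{bjelakovic08,bjelakovic09_a}, and it can be repaired, but only by comparing the \emph{entropic quantities} directly via the operator sandwich $2^{-\ell(S+\eta)}\Pi_\ell\leq\Pi_\ell\rho_B^{\otimes\ell}\Pi_\ell\leq 2^{-\ell(S-\eta)}\Pi_\ell$ (the actual content of the BSST lemma \cite{bennett02}), not via trace-norm closeness of the input states. The paper avoids this altogether and argues differently: it writes $\rho_B^{\otimes\ell}$ exactly as a mixture $\sum_{i=1}^D q(i)\pi_i$ of maximally mixed states on the (at most $(\ell+1)^{\dim\cK_B}$ many) eigenvalue subspaces, purifies each $\pi_i$ by a maximally entangled $\Phi_i$, approximates the weights $q(i)$ by rational fractions $N_i/t$ (Lemma \ref{lemma:approx_emp}), applies Proposition \ref{prop:codes_uninformed_users} separately on each block, and concatenates the resulting codes, recovering the target rates by (almost-)convexity of the information quantities and the fidelity by a product bound. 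If you want to keep your route, replace Step~1 and the trace-norm/AFW comparison by a uniform BSST-type estimate on $\tfrac{1}{\ell}I(X^\ell;C^\ell,\omega_s^{(\ell)})$ and $\tfrac{1}{\ell}I_c(B^\ell\rangle C^\ell X^\ell,\omega_s^{(\ell)})$; as written, the proof does not go through.
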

		to prove the above statement, we will invoke Proposition \ref{prop:codes_uninformed_users}, together with an elementary approximation argument. Note, that for each $\rho \in \cS(\cH)$, $l \in \bbmN$, $\rho^{\otimes l}$, 
		can be written in the form 
		\begin{align}
		\rho^{\otimes l}  = \sum_{i=1}^D q(i) \pi_i, \label{typical_decomposition_eq}
		\end{align}
		with maximally mixed states $\pi_1, \dots, \pi_D$ supported on pairwise mutually orthogonal subspaces, $q \in \cP([D])$, and $D \leq (l+1)^{\dim \cH}$. This can be seen by using the spectral decomposition together with 
		the fact, that the spectrum of $a^{\otimes l}$ has its cardinality upper-bounded by $(l+1)^{\dim \cH}$ for each $a \in \cL(\cH)$.
		The following lemma formalizes a very basic fact about approximation by empirical distributions.
		\begin{lemma} \label{lemma:approx_emp}
			Let $q \in \cP(\cX)$ a probability distribution. There exists a constant, for each $t \in \bbmN$, $t > 2/\min_{x: p(x) > 0} p(x)$  exist integers $N_x \in \bbmN$, $x \in \cX$, such that $N_x$ is zero if $p(x)$ vanishes, and 
			\begin{itemize}
				\item $\forall x \in \cX: \ |p(x) - N_x/t| \ < |\supp(p)|/t$,  \ \text{and} 
				\item $\forall x \in \supp(p): N_x \geq \ C_p\cdot t$
			\end{itemize}
			with a constant $C_p > 0$.
		\end{lemma}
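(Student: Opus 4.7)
The statement is a routine rounding fact, so the plan is short and mostly a matter of choosing the right integers $N_x$.

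The plan is to set $N_x := \lfloor t \cdot p(x) \rfloor$ for $x \in \supp(p)$ and $N_x := 0$ otherwise; the second case is immediate from $p(x) = 0$, and the first bound drops out of the standard floor estimate $|t p(x) - \lfloor t p(x) \rfloor| < 1$, which after dividing by $t$ gives $|p(x) - N_x/t| < 1/t \leq |\supp(p)|/t$ (using $|\supp(p)| \geq 1$ whenever $p$ is nonzero, which is the only nontrivial case).

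For the lower bound on $N_x$ when $x \in \supp(p)$, I would use the hypothesis $t > 2/\min_{y:\,p(y)>0} p(y)$ to conclude that $t \cdot p(x) > 2$ for every $x$ in the support. Consequently
\begin{align*}
N_x \;=\; \lfloor t p(x) \rfloor \;\geq\; t p(x) - 1 \;\geq\; \tfrac{1}{2} t p(x) \;\geq\; \tfrac{t}{2} \, \min_{y \in \supp(p)} p(y),
\end{align*}
where the second inequality uses $tp(x) \geq 2$ to absorb the $-1$ into half of $tp(x)$. Setting $C_p := \tfrac{1}{2} \min_{y \in \supp(p)} p(y) > 0$ finishes the proof.

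There is no real obstacle here; the only mild subtlety is that the statement does not ask for $\sum_x N_x = t$, so no rounding correction across coordinates is needed and the approximation error term $|\supp(p)|/t$ is in fact quite generous (a factor of $|\supp(p)|$ larger than what a per-coordinate floor gives). This slack presumably absorbs any later use in which the $N_x$ must be adjusted so as to sum exactly to $t$, which could be done by increasing at most $|\supp(p)|$ of the $N_x$ by one, but the present statement already follows from the bare floor construction above.
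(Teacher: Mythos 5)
Your proof is correct and follows essentially the same floor-rounding route as the paper: the paper also sets $N_x := \lfloor t\, p(x)\rfloor$ and verifies the lower bound via $N_x \geq t p(x) - 1 \geq C_p t$ with $C_p = \tfrac{1}{2}\min_{x: p(x)>0} p(x)$. The only difference is that the paper singles out one support point $x_0$ and redefines $N_{x_0} := t - \sum_{x \neq x_0} N_x$ so that the $N_x$ sum exactly to $t$ — a property not stated in the lemma but implicitly relied on later when codes of blocklengths $N_i$ are concatenated to total blocklength $l\cdot t$; as you correctly anticipate, the generous $|\supp(p)|/t$ slack in the first bullet is exactly what absorbs that one-coordinate correction.
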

		\begin{proof}
			Excluding a trivial case, we assume $ p > 1$. Let $x_0$ be any member of $\supp p$. Fix $m \in \bbmN$ large enough, define $N_x := \lfloor m\cdot p(x) \rfloor$ for each $x \neq x_0$, and $N_{x_o} := m - \sum_{x \neq x_0} N_x$.
			That the first claim of the lemma holds follows from these definitions. For the second claim, notice, that for all $x \in \supp p$, it holds $N_x \geq m \cdot p(x) - 1 \geq m \cdot C_p$, where
			$C_p := \min_{x: p(x) > 0} p(x) /2$.
		\end{proof}
		\begin{proof}[Proof of Proposition \ref{prop:codes_uninformed_users_2}]
			Fix $l \in \bbmN$ such that 
			$ 
			\dim \cH \cdot \log(l+1)/l \ \leq \ \tfrac{\delta}{3},
			$ 
			and let $\sum_{i=1}^D q(i) \pi_i$ be a decomposition of $\rho^{\otimes l}$ as in Eq. (\ref{typical_decomposition_eq}). Fix for each $i$ a maximally entangled state $\Phi_i$ which purifies $\pi_i$. 
			and define a state
			\begin{align}
			\omega_{s,i} := \omega_s(\cT_s^{\otimes l}, p^l, V^{\otimes l}, \Phi_i)
			\end{align}
			for each $s, i$. Note, that $\omega_s = \sum_{i=1}^D q(i) \omega_{s,i}$ holds. Using an approximation of $q$ by numbers $N_1, \dots, N_D$ according to Lemma \ref{lemma:approx_emp}, 
			we conclude, that if $t$ is any appropriately large number
			\begin{align}
			I_c(B^l\rangle C^lX^l, \omega_s)  \ 
			& = \frac{1}{l} I_c(B^l\rangle C^lX^l, \omega_s^{\otimes l}) \\
			&\leq  \ \sum_{i=1}^D q(i) \frac{1}{l}I(B^l\rangle C^lX^l, \omega_{s,i}) + \dim \cH \frac{\log (l+1)}{l}  \\
			& \leq \ \sum_{i=1}^D \frac{N_{i}}{l\cdot t} I(B^l\rangle C^lX^l, \omega_{s,i}) + \frac{\delta}{2}  \label{coherent_approx}
			\end{align}
			does hold. The first of the above inequalities is by almost-convexity of the von Neumann entropy.  Let for each $i \in [D]$, $\cC_i := (V_i(m), \cE_i , \cD^{(i)}_m)_{m=1}^{M_{i,1}}$ be an $(N_i,M_{i,1},M_{i,2})$-ET code for $\fT$, such that 
			\begin{align}
			\frac{1}{N_i}\log M_{i,1} \geq \underset{s \in S}{\inf} \ \frac{1}{l} I(X^l;C^l, \omega_{s,i}) - \frac{\delta}{4} := R_{1,i}, \  \text{and} \label{code_rate_nu_1}
			\end{align}
			\begin{align}
			\frac{1}{N_i} \log M_{i,2} \ \geq \ \underset{s \in S}{\inf} \ \frac{1}{l}I_c(B^l\rangle C^lX^l, \omega_{s,i}) - \frac{\delta}{4} := R_{2,i} \label{code_rate_nu_2}
			\end{align}
			and 
			\begin{align}
			\underset{s \in S}{\inf} \ P^{ET}(C_i, \cT_s^{lN_i}) \geq 1 - 2^{-N_ic_i} \geq 1 - 2^{-t N c}
			\end{align}
			with $c_i > 0$, $c := \min\{c_1,\dots,c_D\}$. Note, that such codes exist, if we choose $t$ large enough, since the second claim of Lemma \ref{lemma:approx_emp} guarantees long enough blocklengths $N_i$. By concatenation, we build 
			the $(l\cdot t, M_{1},M_{2})$ Scenario II code $\cC := (V(m),\cE, \cD_m)_{m=1}^M$ for $\fT$ with
			\begin{align}
			M_{1} = \prod_{i=1}^D M_{i,1} \ \hspace{1.5cm} \text{and} \hspace{1.5cm} M_{2} = \prod_{i=1}^{D} M_{2,i},  \label{concatenated_code_numbers}
			\end{align}
			where we defined, with any bijection $\iota: [M_1] \rightarrow \prod_{i=1}^D M_{i,1}$,
			\begin{align*}
			V(m) :=   \bigotimes_{i=1}^{M_1} V_i(\iota_i(m)), \hspace{1cm} \cE := \bigotimes_{i=1}^D \cE_i, \ \text{and} \hspace{1.5cm} \cD_m := \bigotimes_{i=1}^{M_1} \cD^{(i)}_{\iota_i}(m).
			\end{align*}
			For the rates of this code, it holds
			\begin{align*}
			\frac{1}{t} \log M_{1} \ 
			&= \ \sum_{i=1}^D \frac{1}{t}\log M_{i,1} \\ 
			&= \ \sum_{i=1}^D \frac{N_i}{t} \frac{1}{N_i}\log M_{i,1} \\  
			&\geq  \sum_{i=1}^D \frac{N_i}{t} \frac{1}{l}I(X^l;C^l, \omega_{s,i}) - \frac{\delta}{4l} \\ 
			&\geq  \sum_{i=1}^D q(i) \frac{1}{l}I(X^l;C^l, \omega_{s,i}) - \frac{\delta}{4l} - \frac{\delta}{2} \\ 
			&\geq  I(X;C, \omega_{s}) - \frac{\delta}{4l} - \frac{\delta}{2}.
			\end{align*}
			The first inequality above is by (\ref{code_rate_nu_1}), the second by choice of $N_1,\dots, N_D$. The last line is by convexity of the quantum mutual information. Moreover, we have
			\begin{align*}
			\frac{1}{lt} \log M_{2} \ 
			&= \ \sum_{i=1}^D \frac{1}{t}\log M_{i,2} \\ 
			&= \ \sum_{i=1}^D \frac{N_i}{t} \frac{1}{N_i}\log M_{i,2} \\  
			&\geq  \sum_{i=1}^D \frac{N_i}{m} \frac{1}{l}I(B \rangle X^lC^l, \omega_{s,i}) - \frac{\delta}{4l} \\ 
			&\geq  I(B\rangle XC, \omega_{s}) - \frac{\delta}{4l} - \frac{\delta}{2},
			\end{align*}
			where the last inequality is by (\ref{coherent_approx}). To evaluate the ET fidelity of $\cC$ regarding $\fT$, we have for each $s \in S$
			\begin{align*}
			P^{ET}(\cC, \cT_s^{l\cdot t}) \ 
			&= \ \frac{1}{M_1} \sum_{m=1}^{M_1} F(\ket{m}\bra{m} \otimes \Phi_m, \id_{\cF_{B}} \otimes \cD_{m} \circ \cT_s^{\otimes lt}(V(m) \otimes \Phi)) \\
			&= \prod_{i=1}^D \frac{1}{M_{1,i}} \sum_{\iota_i(m)=1}^{M_{1,i}} F(\ket{\iota_i(m)}\bra{\iota_i(m)} \otimes \Phi_{i,\iota_i(m)}, \id_{\cF_{B,i}} \otimes \cD^{(i)}_{\iota_i(m)} \circ \cT_s^{\otimes lN_i}(V(\iota_i(m)) \otimes \Phi_{i,\iota_i(m)})) \\
			&\geq (1 - 2^{lt \tilde{c}})^D \\
			&\geq 1- D 2^{-lt \tilde{c}}.
			\end{align*}
			The last inequality is Bernoulli's. We are done.
		\end{proof}
		With the preparation gathered so far, we are able to prove Proposition \ref{prop:inner_bound_no_csi}. 
		\begin{proof}[of Proposition \ref{prop:inner_bound_no_csi}]
			Since $CQ^{ET}(\fM)$ is closed by definition (see Fact \ref{fact:structure_cap_region}), it suffices to show the inclusion
			\begin{align}
			CQ^{ET}(\fM) \ \supset \ \left(\bigcup_{l=1}^\infty \bigcup_{p,V,\Psi} \bigcap_{\cM \in \fM} \frac{1}{l}  \hat{C}^{(1)}(\cM^{\otimes l}, p, V, \Psi) \right). \label{prop:inner_bound_no_csi_1}
			\end{align}
			We fix $l \in \bbmN$, a finite alphabet $\cX$, a cq channel $W: \cX \rightarrow \cS(\cH_A^{\otimes l})$ with pure-state outputs, a pure state $\Psi \in \cS(\cK_B^{\otimes l} \otimes \cK_B^{\otimes l})$, and a probability distribution 
			$p \in \cP(\cX)$. We show
			\begin{align}
			\bigcap_{\cM \in \fM} \frac{1}{l}  \hat{C}^{(1)}(\cM^{\otimes l}, p, V, \Psi) \subset CQ^{ET}(\fM), \label{prop:inner_bound_no_csi_2}
			\end{align}
			which, with subsequent maximization proves the inclusion in (\ref{prop:inner_bound_no_csi_1}). Fix $\delta > 0$, and let $n > l$ be a blocklength written as $n= a\cdot l + b$ with $a,b \in \bbmN$, $0 \leq b < k$. Let $n$ (and consequently $a$) 
			be large enough, to find, using Proposition \ref{prop:codes_uninformed_users} with $\cT_s := \cM_s^{\otimes l}$, an $(a, M_1,M_2)$ ET code 
			\begin{align}
			\tilde{\cC} \ = \ (\tilde{W}(m), \tilde{\cE}, \tilde{\cD}_m)_{m=1}^{M_1},
			\end{align}
			with
			\begin{align}
			\frac{1}{a} \log M_1 \     &\geq \ \underset{s \in S}{\inf} \ I(X;C^l, \omega_{s,l}) - \frac{\delta}{2}, \hspace{.5cm} \text{and} \hspace{.5cm}
			\frac{1}{a} \log M_2 \ 
			\geq \ \underset{s \in S}{\inf} \ I(B^l\rangle XC^l, \omega_{s,l}) - \frac{\delta}{2}.
			\end{align}
			The informational quantities on the right hand sides of the above inequalities are evaluated on the states
			\begin{align*}
			\omega_{s,l} \ := \ \sum_{x \in \cX} p(x) \ket{x} \bra{x} \otimes \id_{\cK_B}^{\otimes l} \otimes \cM_s^{\otimes l}(W(x) \otimes \Psi),  &&(s \in S)
			\end{align*}
			and moreover, 
			\begin{align}
			\underset{s \in S}{\inf} P^{ET}(\tilde{\cC}, \cM_s^{\otimes a\cdot l}) \ \geq \ 1 - 2^{-la\tilde{c}}
			\end{align}
			holds with a constant $\tilde{c} > 0$. We define another $(n,M_1,M_2)$ EG code $\cC := (W(m), \cE, \cD_m)_{m=1}^{M_1}$ with 
			\begin{align*}
			W(m) 
			&:= \tilde{W}(m) \otimes \pi_{\cK_A}^{\otimes b},  \\
			\cE(x) 
			&:= \tilde{\cE}(x) \otimes \pi_{\cK_B}^{\otimes b} &&(x \in \cL(\cF)), \ \text{and} \\
			\cD_m(y) \ 
			&:= \ \tilde{\cD}_m \otimes \tr_{\cH_C}^{\otimes b}(y) &&(y \in \cL(\cH_C^{\otimes n})).
			\end{align*}
			It holds for all $s \in S, m \in [M_1]$
			\begin{align}
			&F\left(\ket{m}\bra{m} \otimes \Phi, \id_{\cK_B}^{\otimes n} \otimes \cD \circ M_s^{\otimes n} \circ(\id_{\cK_A}^{\otimes n} \otimes \cE)(W(m) \otimes \Phi)\right) \\
			&\geq \ F\left(\ket{m}\bra{m} \otimes \Phi, \id_{\cK_B}^{\otimes n} \otimes \cD \circ M_s^{\otimes n} \circ(\id_{\cK_A}^{\otimes n} \otimes \cE)(W(m) \otimes \Phi)\right),
			\end{align}
			i.e. 
			\begin{align}
			P^{ET}(\cC, \cM_s^{\otimes n}, m ) \ \geq \ P^{ET}(\tilde{C}, \cM_s^{\otimes l\cdot a}) \geq 1 - 2^{-la\tilde{c}} = 1 - 2^{-n c}
			\end{align}
			with $c := \tilde{c}/(l+1)$. Moreover, if $n$ is large enough, we have 
			\begin{align}
			\frac{1}{n} \log M_1 \ 
			\geq \ \left(\frac{1}{a + l^{-1}}\right) \frac{1}{l} \log M_1  \geq \ \frac{1}{l} \left(\frac{1}{a}\log M_1 -\frac{\delta}{2}\right) 
			\geq \ \frac{1}{l} \underset{s \in S}{\inf} I(X;C^l, \omega_{s,l}) - \delta. \label{prop:inner_bound_no_csi_rate_1}
			\end{align}
			In the same manner, we can also show
			\begin{align}
			\frac{1}{n} \log M_2 \ \geq \ \frac{1}{l} \underset{s \in S}{\inf} \ I(B^l\rangle C^lX, \omega_{s,l}) - \delta. \label{prop:inner_bound_no_csi_rate_2}
			\end{align}
			With the inequalities in $(\ref{prop:inner_bound_no_csi_rate_1})$, and $(\ref{prop:inner_bound_no_csi_rate_2})$ we have shown
			\begin{align}
			\bigcap_{\cM \in \fM} \frac{1}{l}  \hat{C}^{(1)}(\cM^{\otimes l}, p, V, \Psi) \subset CQ^{ET}(\fM)_\delta.
			\end{align}
			Since $\delta > 0$ was arbitary, (\ref{prop:inner_bound_no_csi_2}) follows. 
		\end{proof}
	\end{subsection}
	\begin{subsection}{Outer bounds to the capacity regions} \label{subsect:proofs_outer_bounds}
		In this section, we prove the outer bounds to the capacity regions as stated in Theorem \ref{theorem:main}. 
		\begin{proposition}[Outer bound to the informed receiver capacity region]\label{prop:outer_bound_c_csi}
			Let $\fM \subset \cC(\cH_A \otimes \cH_B, \cH_C)$. It holds
			\begin{align}
			CQ^{I}(\fM) \ \subset \ \cl \left(\bigcup_{l=1}^\infty \bigcup_{p,V,\Psi} \bigcap_{\cM \in \fM} \frac{1}{l}  \hat{C}^{(1)}(\cM^{\otimes l}, p, V, \Psi) \right)
			\end{align}
		\end{proposition}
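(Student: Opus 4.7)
The plan is to reduce the compound informed-receiver converse to the single-channel converse of Yard, Devetak and Hayden \cite{yard08} for the perfectly known QMAC with one classical and one quantum sender. I start from an achievable pair $(R_1,R_2) \in CQ^I(\fM)$ and, for each sufficiently large $n$, fix an informed-receiver EG code with \emph{common} encoding data $(V_n,\Psi_n)$ and \emph{channel-dependent} decoders $\{\cD^{(n)}_{m,s}\}_{m,s}$ satisfying $\tfrac{1}{n}\log M_{i,n} \geq R_i-\delta$ for $i=1,2$ and $\inf_{s\in S} P^{EG}(\cC_n,\cM_s^{\otimes n}) \geq 1-\epsilon$. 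For every fixed $s$, the restricted family $(V_n(m),\Psi_n,\cD^{(n)}_{m,s})_m$ is an ordinary single-channel $(n,M_{1,n},M_{2,n})$ EG code for $\cM_s^{\otimes n}$ with EG fidelity at least $1-\epsilon$, so the single-channel converse applies to it.

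I then apply Fano's inequality (for the classical message) together with a Devetak / Alicki-Fannes-Winter-type estimate (for the entanglement-fidelity part) and the data-processing inequality to this single-channel code. This yields
\begin{align*}
\log M_{1,n} &\leq I(X;C^n)_{\omega_{s,n}} + n\eta(\epsilon), \\
\log M_{2,n} &\leq I_c(B^n\rangle C^n X)_{\omega_{s,n}} + n\eta(\epsilon),
\end{align*}
where $\omega_{s,n} := \omega(\cM_s^{\otimes n},p_n,V_n,\Psi_n)$ is the effective cqq state attached to the code, $p_n$ is the uniform distribution on $[M_{1,n}]$, and $\eta(\epsilon)\to 0$ uniformly in $n$ and $s$. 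The decisive observation is that the triple $(p_n,V_n,\Psi_n)$ is independent of $s$, because the $s$-dependence was confined to the decoders. Hence $\left(\tfrac{1}{n}\log M_{1,n},\tfrac{1}{n}\log M_{2,n}\right)$ lies within additive distortion of order $\eta(\epsilon)$ of the set $\bigcap_{\cM\in\fM} \tfrac{1}{n}\,\hat{C}^{(1)}(\cM^{\otimes n},p_n,V_n,\Psi_n)$, which in turn is contained in $\bigcup_{l\geq 1}\bigcup_{p,V,\Psi}\bigcap_{\cM\in\fM} \tfrac{1}{l}\,\hat{C}^{(1)}(\cM^{\otimes l},p,V,\Psi)$. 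Sending $\epsilon,\delta \to 0$ and invoking closedness of the right-hand side places $(R_1,R_2)$ in the claimed region.

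The main obstacle is organising the Yard-Devetak-Hayden converse into exactly the form of the two displayed bounds, evaluated on a common effective state $\omega(\cM_s^{\otimes n},p_n,V_n,\Psi_n)$. Concretely, the classical part must be presented as a Holevo bound with Bob's reference traced out in the way that matches the definition of $\omega_{XC}$, and the quantum part must be a conditional coherent-information bound using precisely the purification $\Psi_n$ that already appears in the EG code; both steps are essentially standard but require careful tracking of the reference system. A minor bookkeeping point is that the Bob-reference space $\cF_B \cong \bbmC^{M_{2,n}}$ of the EG code does not a priori have the same dimension as the reference $\cK_B$ prescribed in the definition of $\omega(\cT,p,V,\Psi)$; this is resolved by an isometric embedding $\cF_B \hookrightarrow \cH_B^{\otimes n}$, or by noting that the entropies $S(CX)$ and $S(BCX)$ appearing in $I(X;C)$ and $I_c(B\rangle CX)$ are invariant under attaching pure ancillas on the reference side, so that only the reduced state of $\Psi_n$ on the channel-input copy matters. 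Once the single-channel bounds are delivered in this standardised form, the compound outer bound follows immediately from the shared-encoder observation of the second paragraph.
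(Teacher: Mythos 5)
Your proposal is correct and follows essentially the same route as the paper's proof: a common encoder $(p_n,V_n,\Psi_n)$ with channel-dependent decoders, the single-channel converse (Fano/Holevo for the classical rate, an Alicki--Fannes--Winter continuity bound plus data processing for the coherent-information rate) applied per channel state $s$, and the observation that the resulting bounds all refer to the same effective state family so the rate pair lands in the intersection $\bigcap_{\cM\in\fM}\tfrac{1}{n}\hat{C}^{(1)}(\cM^{\otimes n},p_n,V_n,\Psi_n)$ up to a vanishing distortion. The paper simply carries out the Fano/Holevo/continuity steps explicitly rather than citing the Yard--Devetak--Hayden converse as a black box.
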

		\begin{proof}
			The proof of the statemt is fairly standard, and we give the argument for the reader's convenience. We will show that the proposed outer bound remains valid even in case, that the receiver is allowed to 
			choose the decoding channel dependent on the channel parameter, i.e. user $C$ has channel state information (CSI). 
			Fix an arbitrary $\delta > 0$. Let $\{\cC_{n,s}\}_{s \in S}$ with $\cC_{n,s} := (W(m), \Psi, \cD_{m,s})_{m=1}^{M_{1,n}}$ be an $(n,M_{1,n},M_{2,n})$ scenario-II code with $C$-CSI. We set
			\begin{align}
			\underset{s \in S}{\inf} \ P^{EG}(\cC_{n,s}, \cM_s^{\otimes n}) \ = \ 1 - \epsilon_n, \label{prop:outer_bound_c_csi_1}
			\end{align}
			and 
			$R_{n,i} := \frac{1}{n} \log M_{i,n}$ 
			For $i \in \{1,2\}$. Define moreover the states 
			\begin{align*}
			\omega_{MBC^n,s} = \omega(\cM^{\otimes n}_s, P_M, V, \Psi) = \sum_{i=1}^{M_{1,n}} P_M(m) \ket{m}\bra{m} \otimes (\id_{\cK_B} \otimes \cM^{\otimes n}_s)(\Phi(m) \otimes \Psi),
			\end{align*}
			and 
			$\omega'_{M\hat{M}B\hat{B},s} := (\id_{\bbmC^{M_{n,1}}} \otimes \id_{\cK_B} \otimes \cD_s)(\omega_{MBC^n,s})$
			for each $s \in S$, where $P_M$ is the equidistribution on the message set $[M_{1,n}]$, $\phi(m)$ is a purification of $V(m)$,  and $\cD_s(x) := \sum_{m=1}^M \otimes \ket{m}\bra{m} \otimes \cD_{m,s}(x)$ for each $s \in S$.
			We define a pair $(M,\hat{M}_s)$ of classical random variables having joint distribution 
			\begin{align*}
			P_{M\hat{M}_s}(m,m') := \braket{m \otimes m', \omega_{M\hat{M},s} m \otimes m'}. 
			\end{align*}
			It holds 
			\begin{align}
			\Pr\left(M \neq \hat{M}_s \right) \ \leq \ 2 \sqrt{\epsilon_n} := \tilde{\epsilon_n} \label{prop:outer_bound_c_csi_2}
			\end{align}
			by (\ref{prop:outer_bound_c_csi_1}). We have for each $s \in S$
			\begin{align}
			\log M_{1,n} \ 
			&= \ H(M) \nonumber \\
			&= \ I(M;\hat{M}_s) + H(M|\hat{M}_s) \nonumber \\
			&\leq I(M;\hat{M}_s) + \tilde{\epsilon_n} \log M_{1,n} + 1 \nonumber \\
			&\leq I(M;C^n, \omega_{MBC^n,s}) + \tilde{\epsilon_n} \log M_{1,n} + 1 \label{prop:outer_bound_c_csi_4}
			\end{align}
			where first of the above inequalities is by Fano's lemma, while the second one is Holevo's bound. With $\Phi$ being the target maximally entangled state of the quantum part of the code, we have, by monotonicity of the fidelity under 
			taking partial traces
			\begin{align*}
			P^{ET}(\cC_{n,s}, \cM^{\otimes n}_s) \ = \ \frac{1}{M_{1,n}} \sum_{m=1}^{M_{1,n}} \ F( \ket{m} \bra{m} \otimes \Phi, \omega'_{\hat{M}\hat{B}B_s}) \leq F(\Phi, \omega'_{B\hat{B},s}),
			\end{align*}
			which, combined with $(\ref{prop:outer_bound_c_csi_1})$ allows to bound
			$\|\Phi - \omega'_{\hat{B}B,s}\|_1 \leq \tilde{\epsilon}_n$.
			Using the Lemma \ref{lemma:alicki_fannes} in Appendix \ref{sect:auxiliary_results}, we have
			\begin{align}
			|I(B\rangle \hat{B}, \Phi) - I(B\rangle \hat{B}, \omega'_{\hat{B}B,s})| \ \leq \ 2 H(\tilde{\epsilon_n}) + 4 \tilde{\epsilon}_n \cdot \log M_{2,n} \label{prop:outer_bound_c_csi_3}. 
			\end{align}
			Consequently
			\begin{align}
			\log M_{2,n} \ 
			&= \ I(B \rangle \hat{B}, \Phi) \nonumber \\
			&\leq \ I(B \rangle \hat{B}, \omega'_{B\hat{B},s}) + 2 H(\tilde{\epsilon_n}) + 4 \tilde{\epsilon}_n \cdot \log M_{2,n} \nonumber \\
			&\leq \ I(B \rangle \hat{B}M, \omega'_{B\hat{B}M,s}) + 2 H(\tilde{\epsilon_n}) + 4 \tilde{\epsilon}_n \cdot \log M_{2,n} \nonumber \\
			&\leq \ I(B \rangle \hat{B}M, \omega'_{BC^nM,s}) + 2 H(\tilde{\epsilon_n}) + 4 \tilde{\epsilon}_n \cdot \log M_{2,n} \label{prop:outer_bound_c_csi_5}
			\end{align}
			If we now demand $\epsilon_n \rightarrow 0$, Eqns. (\ref{prop:outer_bound_c_csi_4}) and (\ref{prop:outer_bound_c_csi_5}) show, that if $n$ is large enough, the inequalities 
			\begin{align*}
			\frac{1}{n} \log M_{n,1} &\leq \inf_{s \in S} \frac{1}{n} \ I(M;C^n, \omega_{MBC^n,s}) + \delta, \ \text{and} \\
			\frac{1}{n} \log M_{n,2} &\leq \inf_{s \in S} \frac{1}{n} \ I(B \rangle \hat{B}M, \omega'_{BC^nM,s}) + \delta
			\end{align*}
			hold, which implies the chain of inclusion relations
			\begin{align*}
			(R_{1,n}, R_{2,n}) 
			& \in \ \bigcap_{s \in S} \frac{1}{n} \hat{C}^{(1)}(\cM_s^{\otimes n}, P_M, W, \Psi)_{\delta} \\
			& \subset \bigcup_{p, V, \Psi} \bigcap_{s \in S} \frac{1}{n} \hat{C}^{(1)}(\cM_s^{\otimes n}, p ,V ,\Psi)_\delta \\
			& \subset \bigcup_{l=1}^\infty\bigcup_{p, V, \Psi} \bigcap_{s \in S} \frac{1}{n} \hat{C}^{(1)}(\cM_s^{\otimes n}, p ,V ,\Psi)_\delta  \\
			& \subset \cl\left(\bigcup_{l=1}^\infty\bigcup_{p, V, \Psi} \bigcap_{s \in S} \frac{1}{n} \hat{C}^{(1)}(\cM_s^{\otimes n}, p ,V ,\Psi)\right)_\delta.
			\end{align*}
			Since $\delta > 0$ was arbitrary, we are done. 
		\end{proof}
	\end{subsection}
\end{section}
\begin{section}{Conclusion and Discussion}
	In this work, we derived a multi-letter description to the capacity region of the compound QMAC where one sender sends classical messages while the other aims to transmit quantum information. To our knowledge the characterization 
	in Theorem \ref{theorem:main} is the first result for coding of the QMAC where channel uncertainty is present and genuine quantum transmission tasks such as entanglement transmission and generation are performed. 
	(the earlier work \cite{hirche16} considered the case where all senders send classical messages.) Further research in this direction hopefully will bring new insights for coding of the compound QMAC also in situations where more senders 
	are present and full-quantum coding is performed.\newline 
	The argument employed to prove the coding theorem follows a strategy which seems rather common for perfectly known classical as well as quantum multiple-access channels. A clever combination of single-terminal random codes 
	results in a random code for the QMAC which has sufficient performance on the average. In quantum information theory, this technique was used for simultaneous classical and quantum coding over a single sender quantum channel and can 
	be generalized to channels with uncertainty also in this case \cite{boche17}. By extracting powerful universal random coding results for classical message and entanglement transmission from the literature (\cite{bjelakovic09}, \cite{mosonyi15}), we
	were able to make successful use of this strategy also in case of channel uncertainty. The mentioned universal random coding results may be applied in further research deriving codes in multi-user quantum information theory. 
	Beside consideration of more general multi-user and coding scenario also an extension of the results to other channel models may be a direction of further research. One possible variation of the compound MAC channel model is when one or 
	more of the users are provided with channel state information (CSI). Such additional knowledge of the system is already known as very relevant from the practical point of view when regarding classical channels. 
	Especially, as CSI might lead to substantially larger capacity regions. On the other hand, our results possibly provide the basis for tackling coding for the far more demanding arbitrarily varying QMAC (AVQMAC). In this model, the channel 
	statistics can be given by an arbitrary element chosen from a prescribed set for each channel use. An interpretation of this model is that the channel map is chosen by a malicious ``jamming'' party. The capacity region of the corresponding
	classical channel model was determined in \cite{jahn81}, \cite{ahlswede99}. The code construction established to show achievability in this work allows to achieve each point in the capacity regions with codes with fidelity 
	approaching one exponentially fast. This may allow to use Ahlswede's robustification and elimination techniques \cite{ahlswede78} to find good codes in case of the much more AVQMAC. \newline 
	We remark here, that albeit the characterization of the capacity regions $CQ^{EG}$ and $CQ^{EG}$ given in Theorem \ref{theorem:main} are correct, the description may be improved regarding computational aspects. While our capacity formula is a union
	over intersections of one-shot regions which are rectangles of points fulfilling 
	\begin{align}
	R_1 \ \leq I(X ; C^l, \omega_s)  \hspace{.5cm} \text{and} \hspace{.5cm} R_2 \ \leq \ I(B^l \rangle C^lX, \omega_s) 
	\end{align}
	where $\omega_s$ may be the state defined in (\ref{def:effective_cqq_state}) for each channel state $s \in S$. The authors of this paper feel, that a description replacing the rectangular one-shot regions with the above given boundaries 
	by pentagons with boundaries 
	\begin{align}
	R_1 \leq I(X\rangle C^l B^l, \omega_s), \hspace{.5cm} R_2 \leq I_c(B^l\rangle C^l X, \omega_s) \hspace{.5cm} \text{and} \hspace{.5cm} R_1 + R_2 \leq I(X;C^l, \omega_s) + I_c(B^l \rangle C^l X, \omega_s)
	\end{align}
	instead. Such a characterization was indicated to be possible in case of the perfectly known QMAC in \cite{yard08}, Chapter VII. Whether or not such a description is indeed possible for each example of a compound QMAC and to what extend it 
	would improve the capacity formula given here is a future research topic. In case of a perfectly known QMAC it was discussed in \cite{yard05} that standard examples having single-letter capacity regions in the latter characterization also 
	single-letterize in the first characterization.
\end{section}
\begin{appendix}
	\begin{section}{Auxiliary results}\label{sect:auxiliary_results}
		For the convenience of the reader, some auxiliary standard results used in the text are collected. 
		\begin{lemma}[Gentle measurement lemma \cite{winter99}] \label{lemma:gentle_measurement}
			Let $\rho \in \cS(\cK)$, $E \in \cL(\cH)$, $0 \leq E \leq \bbmeins_{\cH}$. It holds
			\begin{align*}
			\|\sqrt{E}\rho\sqrt{E} - \rho\|_1 \ \leq \ 3 \cdot \sqrt{1 - \tr E \rho}.
			\end{align*}
		\end{lemma}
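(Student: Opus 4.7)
The plan is to lift the problem to purifications and exploit the simple closed form for the trace norm of a rank-two Hermitian operator. First, I would purify $\rho$ to a unit vector $\ket{\Psi} \in \cK \otimes \cK'$ on an enlarged Hilbert space, and consider the (sub-normalised) vector $\ket{\Phi} := (\sqrt{E} \otimes \bbmeins_{\cK'}) \ket{\Psi}$, which is a purification of $\sqrt{E}\rho\sqrt{E}$. Since the trace norm is monotone under the partial trace, it suffices to establish the inequality on the purified level, i.e.\ to bound $\|\ket{\Phi}\bra{\Phi} - \ket{\Psi}\bra{\Psi}\|_1$.

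The entire estimate then rests on two scalars, namely $\|\ket{\Phi}\|^2 = \tr(E\rho) =: 1-\epsilon$ and $\langle\Phi|\Psi\rangle = \tr(\sqrt{E}\rho)$. The crucial step is to observe that since $0 \leq E \leq \bbmeins_{\cK}$, one has the operator inequality $\sqrt{E} \geq E$ (because $\sqrt{x} \geq x$ for $x \in [0,1]$). This converts the given assumption $\tr(E\rho) = 1-\epsilon$ into the usable overlap bound $\langle\Phi|\Psi\rangle \geq 1-\epsilon$, and in particular $|\langle\Phi|\Psi\rangle|^2 \geq (1-\epsilon)^2$.

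The Hermitian operator $\ket{\Phi}\bra{\Phi} - \ket{\Psi}\bra{\Psi}$ has rank at most two and is supported on $\mathrm{span}\{\ket{\Phi},\ket{\Psi}\}$. Expanding $\ket{\Phi} = a\ket{\Psi} + b\ket{\Psi^{\perp}}$ and diagonalising the resulting $2 \times 2$ block yields the closed form
\begin{align*}
\|\ket{\Phi}\bra{\Phi} - \ket{\Psi}\bra{\Psi}\|_1 \ = \ \sqrt{(\|\ket{\Phi}\|^2 - 1)^2 + 4\bigl(\|\ket{\Phi}\|^2 - |\langle\Phi|\Psi\rangle|^2\bigr)}.
\end{align*}
Substituting the bounds from the previous paragraph gives $\|\ket{\Phi}\|^2 - |\langle\Phi|\Psi\rangle|^2 \leq (1-\epsilon) - (1-\epsilon)^2 \leq \epsilon$ and $(\|\ket{\Phi}\|^2 - 1)^2 = \epsilon^2 \leq \epsilon$, so the right-hand side is at most $\sqrt{5\epsilon} \leq 3\sqrt{\epsilon}$, which is the claim.

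The only subtle point, and the step I would flag as the main obstacle, is the operator monotonicity fact $\sqrt{E} \geq E$ on which the whole argument pivots: without it, the overlap $\langle\Phi|\Psi\rangle = \tr(\sqrt{E}\rho)$ has no immediate connection to the hypothesis on $\tr(E\rho)$, and the bound cannot close. Everything else amounts to two-dimensional linear algebra and one appeal to monotonicity of the trace norm under partial traces.
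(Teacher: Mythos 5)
Your proof is correct, and it is worth noting that the paper itself offers no proof of this lemma --- it is quoted as a standard auxiliary result with a citation to \cite{winter99} --- so the relevant comparison is with the argument in that reference. The standard route there bounds $\|\rho-\sqrt{E}\rho\sqrt{E}\|_1$ directly via the triangle inequality and a Cauchy--Schwarz (H\"older) estimate of the form $\|(\bbmeins-\sqrt{E})\rho\|_1\le\sqrt{\tr\bigl[(\bbmeins-\sqrt{E})^2\rho\bigr]}\le\sqrt{1-\tr E\rho}$, with no purification needed. Your route instead lifts to purifications, uses contractivity of the trace norm under the partial trace, and then evaluates the trace norm of the rank-two operator $\ket{\Phi}\bra{\Phi}-\ket{\Psi}\bra{\Psi}$ exactly; I checked the closed form (trace $\|\Phi\|^2-1$, determinant $-(\|\Phi\|^2-|\braket{\Phi|\Psi}|^2)$, hence the stated square-root expression) and the subsequent bounds, and they are all sound, yielding $\sqrt{5\epsilon}\le 3\sqrt{\epsilon}$ --- in fact $\sqrt{4\epsilon-3\epsilon^2}\le 2\sqrt{\epsilon}$, so your method even recovers the sharper constant $2$. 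One small correction of emphasis: the step you flag as the ``main obstacle,'' namely $\sqrt{E}\ge E$, is not an operator-monotonicity statement (nothing of the form $A\le B\Rightarrow f(A)\le f(B)$ is used); it is plain functional calculus, since $f(x)=\sqrt{x}-x\ge 0$ on $[0,1]$ implies $f(E)\ge 0$ by the spectral theorem, and then $\tr(\sqrt{E}\rho)\ge\tr(E\rho)$ because $\rho\ge 0$. So that step is elementary rather than delicate; the overall argument is a valid, self-contained alternative to the cited proof, trading the Cauchy--Schwarz estimate for two-dimensional linear algebra on the purified level.
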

		\begin{lemma}[\cite{yard08}] \label{lemma:auxiliary_results_yard_fidelity_1}
			Let $\Psi, \rho, \sigma \in \cS(\cK)$ be states, where $\Psi$ is pure. Then
			\begin{align*}
			F(\Psi, \rho) \ \geq \ F(\Psi, \sigma) - \tfrac{1}{2}\|\rho - \sigma\|_1
			\end{align*}
		\end{lemma}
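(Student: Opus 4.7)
The plan is to exploit the fact that for a pure state $\Psi = \ket{\psi}\bra{\psi}$, the fidelity as defined in the paper collapses to the expectation value $F(\Psi,\rho) = \braket{\psi,\rho\,\psi} = \tr(\Psi\rho)$. I would first verify this by direct computation from the definition $F(\Psi,\rho) = \|\sqrt{\Psi}\sqrt{\rho}\|_1^2 = \|\ket{\psi}\bra{\psi}\sqrt{\rho}\|_1^2$, observing that the operator $\ket{\psi}\bra{\psi}\sqrt{\rho}$ has rank one, so its trace norm squared equals the unique nonzero eigenvalue of $\sqrt{\rho}\ket{\psi}\bra{\psi}\sqrt{\rho}$, namely $\braket{\psi,\rho\,\psi}$. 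This linearization is the key structural feature that makes the pure-state case much more tractable than the general fidelity.

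With this in hand, the difference becomes
\begin{align*}
F(\Psi,\rho) - F(\Psi,\sigma) \ = \ \tr(\Psi(\rho-\sigma)),
\end{align*}
so the task reduces to a one-sided bound on the trace of $\Psi$ against a Hermitian traceless operator. Next I would invoke the Jordan decomposition $\rho-\sigma = P - N$ with $P, N \geq 0$ having mutually orthogonal supports. Since $\tr(\rho - \sigma) = 0$ we have $\tr P = \tr N$, and by definition of the trace norm $\tr P + \tr N = \|\rho - \sigma\|_1$; hence $\tr P = \tr N = \tfrac{1}{2}\|\rho-\sigma\|_1$.

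Then I would estimate, using $0 \leq \Psi \leq \bbmeins_\cK$ together with positivity of $P$ and $N$,
\begin{align*}
\tr(\Psi(\rho-\sigma)) \ = \ \tr(\Psi P) - \tr(\Psi N) \ \geq \ -\tr(\Psi N) \ \geq \ -\tr N \ = \ -\tfrac{1}{2}\|\rho-\sigma\|_1,
\end{align*}
which is precisely the asserted inequality. The main obstacle is merely recognizing the rank-one simplification of the fidelity in the first step; after that, the proof is a clean application of the Jordan decomposition. This argument also explains the sharp constant $\tfrac{1}{2}$, since the naive bound $|\tr(\Psi(\rho-\sigma))| \leq \|\Psi\|_\infty \|\rho-\sigma\|_1 = \|\rho-\sigma\|_1$ obtained from Hölder's inequality would be a factor of two too weak; the tracelessness of $\rho - \sigma$ is what saves the extra factor.
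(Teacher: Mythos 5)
Your proof is correct: the rank-one computation giving $F(\Psi,\rho)=\braket{\psi,\rho\,\psi}=\tr(\Psi\rho)$ for pure $\Psi$, followed by the Jordan decomposition of $\rho-\sigma$ and the bound $\tr(\Psi N)\leq \tr N=\tfrac{1}{2}\|\rho-\sigma\|_1$ using $0\leq\Psi\leq\bbmeins_\cK$, is exactly the standard argument. The paper itself states this lemma without proof, citing \cite{yard08}, and your derivation coincides with the one used there, so there is nothing further to compare.
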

		\begin{lemma}[\cite{yard08}] \label{lemma:auxiliary_results_yard_fidelity_2}
			Let $\Psi \in \cS(\cK_A), \rho \in \cS(\cK_B), \sigma \in \cS(\cK_{A} \otimes \cK_B)$ Then 
			\begin{align*}
			F(\Psi \otimes \rho, \sigma) \ \geq \ 1 - \|\rho - \sigma_B\|_1 - 3\left(1 - F(\Psi, \sigma_A)\right).
			\end{align*}
		\end{lemma}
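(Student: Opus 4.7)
My plan is to reduce the bipartite fidelity on $\cK_A\otimes\cK_B$ to a univariate (possibly sub-normalised) fidelity on $\cK_B$ alone, exploiting the fact that $\Psi=|\psi\rangle\langle\psi|$ is pure — which, from the application in the proof of Proposition \ref{prop:codes_uninformed_users} (where $\Psi$ plays the role of the maximally entangled state $\Phi$), is the relevant case, and in fact necessary, since the stated inequality can fail when $\Psi$ is genuinely mixed. The key observation is that purity of $\Psi$ gives $\sqrt{\Psi\otimes\rho}=\Psi\otimes\sqrt{\rho}$, so with $M:=(\Psi\otimes\sqrt{\rho})\sqrt{\sigma}$ the projection $\Psi$ sandwiches $\sigma$ on the $A$-side to produce
\begin{align*}
MM^{\dagger}\;=\;\Psi\otimes\bigl(\sqrt{\rho}\,\tilde{\sigma}_B\,\sqrt{\rho}\bigr),\qquad \tilde{\sigma}_B\;:=\;\langle\psi|\sigma|\psi\rangle\in\cL(\cK_B).
\end{align*}
Taking $\|M\|_1^2$ and rewriting as $(\tr\sqrt{\sqrt{\rho}\,\tilde{\sigma}_B\sqrt{\rho}})^2$ yields the identity $F(\Psi\otimes\rho,\sigma)=F(\rho,\tilde{\sigma}_B)$, with the right-hand side interpreted as the natural extension of the fidelity formula to the sub-normalised $\tilde{\sigma}_B$. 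A short trace computation gives $\tr\tilde{\sigma}_B=\langle\psi|\sigma_A|\psi\rangle=F(\Psi,\sigma_A)$.

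Set $\varepsilon:=1-F(\Psi,\sigma_A)$; if $\varepsilon=1$ the claim is trivial, so I assume $\varepsilon<1$ and put $\hat{\sigma}_B:=(1-\varepsilon)^{-1}\tilde{\sigma}_B\in\cS(\cK_B)$. Homogeneity of the fidelity in each argument then yields $F(\rho,\tilde{\sigma}_B)=(1-\varepsilon)F(\rho,\hat{\sigma}_B)$. Next, pinching $\sigma$ with the complementary projections $\Psi$ and $I_A-\Psi$ on the $A$-side and tracing over $A$ decomposes $\sigma_B=\tilde{\sigma}_B+\sigma'_B$ with $\sigma'_B\geq 0$ and $\tr\sigma'_B=\varepsilon$; consequently $\|\sigma_B-\tilde{\sigma}_B\|_1=\varepsilon$, and the triangle inequality delivers $\|\rho-\hat{\sigma}_B\|_1\leq\|\rho-\sigma_B\|_1+2\varepsilon$.

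Finally, Fuchs--van de Graaf in the form $\sqrt{F(a,b)}\geq 1-\tfrac{1}{2}\|a-b\|_1$ combined with the elementary $(1-x)^2\geq 1-2x$ gives $F(\rho,\hat{\sigma}_B)\geq 1-\|\rho-\hat{\sigma}_B\|_1$, so chaining the three estimates yields
\begin{align*}
F(\Psi\otimes\rho,\sigma)\;=\;(1-\varepsilon)F(\rho,\hat{\sigma}_B)\;\geq\;(1-\varepsilon)\bigl(1-\|\rho-\sigma_B\|_1-2\varepsilon\bigr)\;\geq\;1-\|\rho-\sigma_B\|_1-3\varepsilon,
\end{align*}
which is the claim. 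The delicate step — and the main obstacle — is verifying the reduction $F(\Psi\otimes\rho,\sigma)=F(\rho,\tilde{\sigma}_B)$; once this is in place the remaining manipulations are routine. A seemingly more natural route through the intermediate state $\Psi\otimes\sigma_B$ combined with the gentle-measurement lemma (Lemma \ref{lemma:gentle_measurement}) only delivers $\sqrt{\varepsilon}$-type remainder terms, which is too weak for the linear dependence on $\varepsilon=1-F(\Psi,\sigma_A)$ in the statement, so the single-system reduction above appears genuinely necessary.
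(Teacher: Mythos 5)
Your argument is correct; I checked each step. Purity of $\Psi$ indeed gives $\sqrt{\Psi\otimes\rho}=\Psi\otimes\sqrt{\rho}$, conjugating $\sigma$ by $\Psi\otimes\sqrt{\rho}$ produces $\Psi\otimes\sqrt{\rho}\,\tilde{\sigma}_B\sqrt{\rho}$ with $\tilde{\sigma}_B=\bra{\psi}\sigma\ket{\psi}$, whence $F(\Psi\otimes\rho,\sigma)=\bigl(\tr\sqrt{\sqrt{\rho}\,\tilde{\sigma}_B\sqrt{\rho}}\bigr)^2$ and $\tr\tilde{\sigma}_B=F(\Psi,\sigma_A)=1-\varepsilon$; the pinching by $\Psi$ and $\bbmeins-\Psi$ has vanishing cross terms under $\tr_A$, so $\sigma_B=\tilde{\sigma}_B+\sigma_B'$ with $\sigma_B'\geq 0$, $\tr\sigma_B'=\varepsilon$, giving $\|\rho-\hat{\sigma}_B\|_1\le\|\rho-\sigma_B\|_1+2\varepsilon$; Fuchs--van de Graaf together with $(1-\varepsilon)\bigl(1-\|\rho-\sigma_B\|_1-2\varepsilon\bigr)\ge 1-\|\rho-\sigma_B\|_1-3\varepsilon$ closes the argument. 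Note that the paper itself does not prove this lemma but quotes it from \cite{yard08}, so there is no in-paper proof to compare against; your reduction to the sub-normalised conditional operator $\tilde{\sigma}_B$ is a clean, self-contained derivation, and it achieves the linear dependence on $1-F(\Psi,\sigma_A)$ which, as you rightly observe, a triangle-inequality detour through $\Psi\otimes\sigma_B$ (purified distance plus gentle measurement) would not deliver. Your caveat about purity is substantive rather than cosmetic: as printed, with arbitrary $\Psi\in\cS(\cK_A)$, the inequality is false --- take $\Psi=\rho=\tfrac{1}{2}\bbmeins$ on qubits and $\sigma$ maximally entangled, so that $F(\Psi,\sigma_A)=1$ and $\|\rho-\sigma_B\|_1=0$ while $F(\Psi\otimes\rho,\sigma)=\tfrac14<1$ --- so ``$\Psi$ pure'' must be added to the hypotheses, exactly as in Lemma \ref{lemma:auxiliary_results_yard_fidelity_1}; this is harmless for the paper, since in every application both tensor factors ($\ket{m}\bra{m}$ and the maximally entangled state $\Phi$) are pure.
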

		\begin{lemma}[\cite{alicki04},\cite{winter15}] \label{lemma:alicki_fannes}
			Let $\rho, \sigma \in \cS(\cK_{A} \otimes \cK_B)$, $\|\rho - \sigma\|_1 \leq \epsilon \leq 1$. Then
			\begin{align}
			|I(A\rangle B, \rho) - I(A \rangle B, \sigma)| \ \leq \ 6 \epsilon \log \dim \cK_A + (2 + 4 \epsilon) h(2 \epsilon  / 1 + 2 \epsilon),
			\end{align}
			where $h(x) = - x \log x - (1-x) \log (1-x)$, $s \in (0,1)$ is the binary Shannon entropy.
		\end{lemma}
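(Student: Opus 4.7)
The plan is to reduce the claim about coherent information to a continuity statement for the quantum conditional entropy, since $I(A\rangle B, \rho) = S(\rho_B) - S(\rho) = -S(A|B)_\rho$ with $S(A|B)_\rho := S(\rho) - S(\rho_B)$, and analogously for $\sigma$. It therefore suffices to establish a bound of the claimed shape on $|S(A|B)_\rho - S(A|B)_\sigma|$. The naive route of applying the Fannes inequality to $S(\rho)$ and to $S(\rho_B)$ separately would yield an unwanted $\log \dim \cK_B$ term, which is precisely what the lemma avoids. To remove that dependence I would follow Winter's convex-splitting construction, which trades the $B$-dimension for a binary-entropy correction.

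Concretely, I set $\epsilon' := \tfrac{1}{2}\|\rho-\sigma\|_1 \leq \epsilon/2$ and take the Jordan decomposition $\rho - \sigma = \Delta_+ - \Delta_-$, with $\Delta_\pm \geq 0$, $\Delta_+ \Delta_- = 0$, and $\tr \Delta_\pm = \epsilon'$. This yields the common upper envelope $\omega^{\ast} := (1+\epsilon')^{-1}(\rho + \Delta_-) = (1+\epsilon')^{-1}(\sigma + \Delta_+) \in \cS(\cK_A \otimes \cK_B)$. I would then attach a classical register $R \simeq \bbmC^2$ and form the two cq extensions $\omega_\rho := (1+\epsilon')^{-1}(\rho \otimes |0\rangle\langle 0|_R + \Delta_- \otimes |1\rangle\langle 1|_R)$ and $\omega_\sigma := (1+\epsilon')^{-1}(\sigma \otimes |0\rangle\langle 0|_R + \Delta_+ \otimes |1\rangle\langle 1|_R)$ on $\cK_A \otimes \cK_B \otimes R$; both have $\omega^{\ast}$ as their partial trace on $\cK_A \otimes \cK_B$.

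Next I would invoke two standard properties of $S(A|B)$. First, for a state classically controlled by $R$ as above, conditional entropy is additive over the register, giving $S(A|BR)_{\omega_\rho} = (1+\epsilon')^{-1} S(A|B)_\rho + \epsilon'(1+\epsilon')^{-1} S(A|B)_{\Delta_-/\epsilon'}$, and the analogous identity for $\omega_\sigma$. Second, dropping a classical register cannot decrease conditional entropy and increases it by at most the Shannon entropy of that register: $0 \leq S(A|B)_{\omega^{\ast}} - S(A|BR)_{\omega_\rho} \leq h(\epsilon'/(1+\epsilon'))$, and likewise for $\omega_\sigma$. Subtracting the two identities that share the common quantity $S(A|B)_{\omega^{\ast}}$, rearranging, and using the crude bound $|S(A|B)_\tau| \leq \log \dim \cK_A$ for any state $\tau$ on $\cK_A \otimes \cK_B$ gives the Winter inequality $|S(A|B)_\rho - S(A|B)_\sigma| \leq 2\epsilon' \log \dim \cK_A + (1+\epsilon') h(\epsilon'/(1+\epsilon'))$. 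The stated form $6\epsilon \log \dim \cK_A + (2+4\epsilon)h(2\epsilon/(1+2\epsilon))$ then falls out by substituting $\epsilon' \leq \epsilon/2 \leq \epsilon$ and exploiting monotonicity of $h$ on $[0,1/2]$, absorbing all slack into the generous explicit constants.

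The main obstacle is the second inequality above, namely $I(A;R|B)_{\omega_\rho} \leq H(R) = h(\epsilon'/(1+\epsilon'))$ for classical $R$. This is where the quantum structure enters nontrivially and is the heart of the Alicki--Fannes--Winter refinement over a double application of Fannes: for classical $R$ the quantum conditional mutual information $I(A;R|B)$ equals $H(R|B) - H(R|AB)$, which is bounded above by $H(R|B) \leq H(R)$ via classical data processing; phrasing this step cleanly (for instance via strong subadditivity applied to a purification) is what makes the argument work. Once this ingredient is in place the rest is bookkeeping, and matching the explicit constants stated in the lemma is a routine, if slightly wasteful, calculation.
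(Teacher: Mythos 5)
The paper does not prove this lemma at all---it is quoted as a known result with citations to Alicki--Fannes and to Winter---and your argument is a correct reconstruction of exactly the proof in the cited source \cite{winter15}: Jordan decomposition, the common envelope state $\omega^{\ast}$, the classical flag register, and the key bound $I(A;R|B)\leq H(R)$ for classical $R$ via $S(R|AB)\geq 0$, yielding $|S(A|B)_\rho - S(A|B)_\sigma| \leq 2\epsilon'\log\dim\cK_A + (1+\epsilon')h(\epsilon'/(1+\epsilon'))$. The passage to the stated (looser) constants is indeed routine; the only point worth writing out is that $h\bigl(\epsilon'/(1+\epsilon')\bigr)\leq h\bigl(2\epsilon/(1+2\epsilon)\bigr)$ also when $2\epsilon/(1+2\epsilon)>1/2$, which follows from the symmetry $h(x)=h(1-x)$ together with $\epsilon'/(1+\epsilon')\leq 1/(1+2\epsilon)\leq 1/2$... wait, rather $1/(1+2\epsilon)\in[1/3,1/2]$ for the relevant range, and monotonicity of $h$ on $[0,1/2]$.
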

	\end{section}
	\begin{section}{Convexity of the capacity formula} \label{appendix:capacity_regions_convexity}
		In this appendix, we show, that the functional expressions in Theorem \ref{theorem:main} do not need further convexification. Following the arguments given in \cite{yard08} for the case of a perfectly known QMAC, we show, that for 
		given $\fM \subset \cC(\cH_A \otimes \cH_B, \cH_C)$, the set
		\begin{align}
		\hat{C}_1(\fM) := \bigcup_{l=1}^\infty \bigcup_{p,V,\Psi} \bigcap_{\cM \in \fM} \hat{C}^{(1)}(\fM^{\otimes l}, p, V, \Psi), 
		\end{align}
		is convex, i.e. we show
		\begin{lemma}
			Let $\fM \in \cC(\cH_A \otimes \cH_B, \cH_C)$. It holds
			\begin{align}
			\conv(\hat{C}_1(\fM)) = \hat{C}_1(\fM).
			\end{align}
		\end{lemma}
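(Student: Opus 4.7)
The plan is to establish convexity of $\hat{C}_1(\fM)$ by a standard concatenation (``time-sharing'') construction acting directly on the underlying data $(l, p, V, \Psi)$ that parametrise the union. Suppose $(R_1, R_2), (R_1', R_2') \in \hat{C}_1(\fM)$ are witnessed respectively by $(l, p, V, \Psi)$ over an alphabet $\cX$ and $(l', p', V', \Psi')$ over an alphabet $\cX'$. For any rational $\lambda = p/q \in (0,1)$ in lowest terms, I would choose positive integers $a, b$ such that $\lambda = al/(al + bl')$; the choice $a := pl'$, $b := (q-p) l$ does the job. Set $N := al + bl'$.

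Next, I would combine the two witnesses into a single blocklength-$N$ witness living over the product alphabet $\cX'' := \cX^a \times \cX'^b$, equipped with the product distribution $p'' := p^{\otimes a} \otimes p'^{\otimes b}$, the cq channel $V'' : \cX'' \to \cS(\cK_A^{\otimes N})$ given by $V''(x_1, \ldots, x_a, y_1, \ldots, y_b) := V(x_1) \otimes \cdots \otimes V(x_a) \otimes V'(y_1) \otimes \cdots \otimes V'(y_b)$, and the pure state $\Psi'' := \Psi^{\otimes a} \otimes \Psi'^{\otimes b}$. Because $\cM_s^{\otimes N}$ itself factorises as $(\cM_s^{\otimes l})^{\otimes a} \otimes (\cM_s^{\otimes l'})^{\otimes b}$, the associated effective cqq state $\omega''_s := \omega(\cM_s^{\otimes N}, p'', V'', \Psi'')$ is, after a trivial relabelling of tensor factors, equal to $\omega_s^{\otimes a} \otimes (\omega'_s)^{\otimes b}$, where $\omega_s$ and $\omega'_s$ denote the effective states of the two original witnesses. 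Crucially, this identification is uniform in $\cM_s \in \fM$.

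Additivity of the quantum mutual information and of the (classically conditioned) coherent information on tensor products of cqq states then immediately yields $I(X''; C^N, \omega''_s) = a \cdot I(X; C^l, \omega_s) + b \cdot I(X'; C^{l'}, \omega'_s)$ and $I_c(B^N \rangle C^N X'', \omega''_s) = a \cdot I_c(B^l \rangle C^l X, \omega_s) + b \cdot I_c(B^{l'} \rangle C^{l'} X', \omega'_s)$. Dividing by $N$ and invoking the defining inequalities of $\hat{C}^{(1)}$ for the two input witnesses converts the coefficients $al/N, bl'/N$ into exactly $\lambda, 1-\lambda$, placing the convex combination $(\lambda R_1 + (1-\lambda) R_1', \lambda R_2 + (1-\lambda) R_2')$ inside $\bigcap_{\cM \in \fM} \frac{1}{N} \hat{C}^{(1)}(\cM^{\otimes N}, p'', V'', \Psi'') \subset \hat{C}_1(\fM)$. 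The feature that makes this argument survive the passage to a \emph{compound} channel is precisely that the triple $(p'', V'', \Psi'')$ is chosen independently of $s \in S$, so the intersection over $\fM$ is preserved at the concatenated level.

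The only point I expect to require any care, as opposed to real difficulty, is that the identity $\lambda = al/(al+bl')$ exhausts only the rationals in $(0,1)$. A general $\lambda \in [0,1]$ will be handled by approximating $\lambda_n \to \lambda$ by such rationals and obtaining a sequence of convex combinations inside $\hat{C}_1(\fM)$ whose limit $\lambda(R_1,R_2) + (1-\lambda)(R_1',R_2')$ is recovered on passage to the closure $\cl \, \hat{C}_1(\fM)$ that already sits outside the union in Theorem \ref{theorem:main}. Once the additivity of $I$ and of the conditional $I_c$ on independent cqq systems is in hand, the tensor-product bookkeeping above is routine and no genuine obstacle remains.
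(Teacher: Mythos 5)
Your proposal is correct and follows essentially the same route as the paper's appendix argument: concatenate the two witnesses into a product witness $(p^{\otimes a}\otimes p'^{\otimes b},\, V^{\otimes a}\otimes V'^{\otimes b},\, \Psi^{\otimes a}\otimes\Psi'^{\otimes b})$, observe that the effective cqq state factorises uniformly in $\cM\in\fM$, apply additivity of $I$ and $I_c$ on product states, and reach general $\lambda$ by rational approximation. Your passage to the closure for irrational $\lambda$ is the same concession the paper itself makes through its $\delta$-fattening $\hat{C}_1(\fM)_\delta$, so the two arguments have identical content (with your exact treatment of rational $\lambda$ being, if anything, slightly cleaner).
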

		\begin{proof}
			We have to show, that for each $\lambda \in (0,1)$, and any two rate pairs $(R_1^{(i)},R_2^{(i)}) \in \hat{C}_1(\fM)$, $i = 1,2$, the their convex combination $(\overline{R}_1, \overline{R}_2)$ with $\overline{R}_j^{(i)} = 
			\lambda R_j^{(1)} + (1- \lambda) R_j^{(2)}$
			for $i = 1,2$ is also a member of $\hat{C}_1(\fM)$. Assume
			\begin{align}
			(R_1^{(i)}, R_2^{(i)}) \in \bigcap_{\cM \in \fM} \hat{C}^{(1)}(\cM^{\otimes l_i}, p_i, V_i, \Psi_i)
			\end{align}
			for some $l_i, p_i, V_i, \Psi_i$, i.e. with effective states 
			\begin{align}
			\omega_i(\cM) := \omega(\cM^{\otimes l_i}, p_i, V_i, \Psi_i) &&(i \in {1,2}, \cM \in \fM)
			\end{align}
			according to (\ref{def:effective_cqq_state}), the equations
			\begin{align}
			l_i \cdot R_1^{(i)} \ & \leq \ \underset{\cM \in \fM}{\inf} \ I(X_i; C^{l_i}, \omega_i), \ \text{and}   \nonumber \\
			l_i \cdot R_2^{(i)} \ & \leq \ \underset{\cM \in \fM}{\inf} \ I(B^{l_i}\rangle C^{l_iX_i}, \omega_i)   \label{appendix:capacity_regions_convexity_1}
			\end{align}
			are fulfilled. Fix $\delta > 0$, and let $k,n \in \bbmN$, $0 < k < n$ such that
			\begin{align}
			|\lambda - \tfrac{k}{n}| \ < \frac{\delta}{R_1^{(1)} + R_2^{(1)} + R_1^{(2)} + R_2^{(2)}}.
			\end{align}
			Set $t_1 := k l_2$, $t_2 := (n-k) l_1$. 
			With 
			\begin{align}
			\tilde{\omega}(\cM) \ := \ \omega_1(\cM)^{\otimes t_1} \otimes \omega_2(\cM)^{\otimes t_2}, 
			\end{align}
			which is unitarily equivalent to 
			\begin{align}
			\omega(\cM^{\otimes nl_1l_2}, p_1^{t_2}\otimes p_2^{t_2}, V_1^{\otimes t_1} \otimes V_2^{\otimes t_2}, \Psi_1^{\otimes t_1} \otimes \Psi_2^{\otimes t_2})
			\end{align}
			we have
			\begin{align}
			I(X_1^{t_1}X_2^{t_2}; C^{l_1l_2n}, \tilde{\omega}(\cM))
			& \ = \ I(X_1^{t_1}X_2^{t_2}; C^{l_1l_2n}, \omega_1(\cM)^{\otimes t_1} \otimes \omega_2(\cM)^{\otimes t_2}) \\
			& \ = \ I(X_1^{t_1}; C^{l_1t_1}, \omega_1(\cM)^{\otimes t_1}) + I(X_2^{t_2}; C^{l_2t_2}, \omega_2(\cM)^{\otimes t_2}) \\
			& \ = \ t_1 \cdot I(X_1; C^{l_1}, \omega_1(\cM)) + t_2 \cdot I(X_2; C^{l_2}, \omega_2(\cM)) \\
			& \ \geq t_1 l_1 R_1^{(1)} + t_2 l_2 R_1^{(2)} \\
			& \ \geq k l_1 l_2 R_1^{(1)} + (n-k) l_1 l_2 R_1^{(2)}.
			\end{align}
			where the second and third equality above are by additivity of the quantum mutual information evaluated on product states, and the inequality is by (\ref{appendix:capacity_regions_convexity_1}). Consequently, we have
			\begin{align}
			\frac{1}{l_1 l_2 n} \underset{\cM \in \fM}{\inf} I(X_1^{t_1}X_2^{t_2}; C^{l_1l_2n}, \tilde{\omega}(\cM)) 
			& \ \geq \  \frac{k}{n} R_1^{(1)} + (1 - \frac{k}{n} R_1^{(2)}  \\
			& \ \geq \ \lambda R_1^{(1)} + (1-\lambda) R_1^{(2)} - \delta.  \label{appendix:capacity_regions_convexity_2}
			\end{align}
			In a similar manner, also the inequality 
			\begin{align}
			\frac{1}{l_1 l_2 n} \underset{\cM \in \fM}{\inf} I(B^{n l_1 l_2} \rangle C^{n l_1 l_2}X_1^{t_1}X_2^{t_2}, \tilde{\omega}(\cM)) & \ \geq \ \lambda R_2^{(1)} + (1-\lambda) R_2^{(2)} - \delta \label{appendix:capacity_regions_convexity_3}
			\end{align}
			is verified. By (\ref{appendix:capacity_regions_convexity_2}), and (\ref{appendix:capacity_regions_convexity_3}), 
			\begin{align}
			(\overline{R}_1, \overline{R}_2) \ 
			\in \ \left( \bigcap_{\cM \in \fM} \frac{1}{l_1l_2} \hat{C}^{(1)}(\cM^{l_1l_2n}, p_1^{t_1}\otimes p_2^{t_2}, V_1^{\otimes t_1} \otimes V_2^{\otimes t_2}, \Psi_1^{\otimes t_1} \otimes \Psi_2^{\otimes t_2})\right)_\delta \
			\subset \ \hat{C}_1(\fM)_\delta.
			\end{align}
			Since $\delta > 0$ can be chosen arbitrarily, the claim of the lemma follows. 
		\end{proof}
	\end{section}
\end{appendix}




\end{document}